\def\acc#1{\left\{ #1 \right\}}
\renewcommand{\le}{\leqslant}
\renewcommand{\ge}{\geqslant}
\newtheorem{theorem}{Theorem}
\newtheorem{corollary}[theorem]{Corollary}
\newtheorem{lemma}[theorem]{Lemma}
\begin{document}
\begin{frontmatter}
\title{Vertex partitions of $(C_3,C_4,C_6)$-free planar graphs}
\author[LIRMM]{Fran\c{c}ois Dross}
\author[LIRMM]{Pascal Ochem}
\address[LIRMM]{LIRMM, Universit\'e de Montpellier, and CNRS. France}

\begin{abstract}
A graph is $(k_1,k_2)$-colorable if its vertex set can be partitioned into a graph with maximum degree at most $k_1$ and and a graph with maximum degree at most $k_2$.
We show that every $(C_3,C_4,C_6)$-free planar graph is $(0,6)$-colorable.
We also show that deciding whether a $(C_3,C_4,C_6)$-free planar graph is $(0,3)$-colorable is NP-complete.
\end{abstract}

\end{frontmatter}

\section{Introduction}\label{sec:intro}
A graph is $(k_1,k_2)$-colorable if its vertex set can be partitioned into a graph with maximum degree at most $k_1$ and and a graph with maximum degree at most $k_2$.
Choi, Liu, and Oum~\cite{CLO} have established that there exists exactly two minimal sets of forbidden cycle length such that every planar graph
is $(0,k)$-colorable for some absolute constant $k$. 
\begin{itemize}
 \item planar graphs without odd cycles are bipartite, that is, $(0,0)$-colorable.
 \item planar graphs without cycles of length $3$, $4$, and $6$ are $(0,45)$-colorable.
\end{itemize}

The aim of this paper is to improve this last result.
Notice that forbidding cycles of length $3$, $4$, and $6$ as subgraphs or as induced subgraphs result in the same graph class.
For every $n\ge3$, we denote by $C_n$ the cycle on $n$ vertices.
So we are interested in the class $\cal C$ of $(C_3,C_4,C_6)$-free planar graph.

We will prove the following two theorems in the next two sections.

\begin{theorem} \label{t+}
Every graph in $\cal C$ is $(0,6)$-colorable.
\end{theorem}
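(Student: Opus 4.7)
The plan is to prove Theorem~\ref{t+} by the discharging method on a minimum counterexample. Let $G\in\mathcal{C}$ be a counterexample minimizing $|V(G)|$, embedded in the plane. Standard arguments let us assume $G$ is $2$-connected, so every face of $G$ is bounded by a cycle; since $G$ is $(C_3,C_4,C_6)$-free, every face has length equal to $5$ or at least $7$, and in particular the girth of $G$ is at least $5$.

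I would first compile a list of reducible configurations. The general scheme is to identify a small subgraph $H$ around a low-degree vertex such that any $(0,6)$-coloring of $G-V(H)$, which exists by minimality, extends to a $(0,6)$-coloring of $G$, contradicting the choice of counterexample. The extension proceeds by case analysis on whether the boundary neighbors of $H$ lie in the independent class $I$ or in the bounded-degree class $D$. Since $D$ tolerates degree up to $6$, extension is easy when a removed vertex has a low-degree neighbor, and delicate when all its neighbors are of high degree with their quota already filled. I expect to rule out at least vertices of degree $\le 1$, and to obtain restrictions on how $2$- and $3$-vertices may cluster together or lie on short faces.

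Next I would run the discharging. Assign each vertex $v$ the initial charge $\mu(v)=d(v)-4$ and each face $f$ the initial charge $\mu(f)=\ell(f)-4$. By Euler's formula,
\[
\sum_{v\in V(G)}\mu(v)+\sum_{f\in F(G)}\mu(f)=-8.
\]
Only $2$-vertices (charge $-2$) and $3$-vertices (charge $-1$) carry negative charge; every face has charge at least $1$. I would design redistribution rules along the lines of ``each face pays prescribed amounts to its incident $2$- and $3$-vertices, and each vertex of degree at least $5$ contributes along shared edges to its $2$- and $3$-vertex neighbors''. Invoking the reducible configurations, one then verifies case by case that every vertex and every face ends up with non-negative charge, contradicting the total $-8$.

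The main obstacle will be the $5$-faces: each starts with only a single unit of charge, and since $6$-faces are forbidden they are the likely bottleneck of any extremal configuration. The proof must therefore include structural lemmas strong enough to prevent a $5$-face from being incident to too many low-degree vertices, while ensuring that each such forbidden configuration is genuinely reducible under the constraint that the class $D$ permits degree only up to $6$. Calibrating these two ingredients against each other so that every $5$-face can afford what it owes will be the most technical part of the argument.
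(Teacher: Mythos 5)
Your proposal is a plan for a discharging proof, not a proof: the two places where all the difficulty of this theorem lives --- the exact list of reducible configurations and the discharging rules that make every $5$-face solvent --- are precisely the parts you leave as ``to be designed'' and ``to be calibrated''. You correctly predict that $5$-faces are the bottleneck, but resolving that bottleneck is not a matter of tuning constants. In the actual proof one needs, beyond the easy reductions you mention, structural lemmas such as: every vertex of degree between $3$ and $7$ has \emph{two} neighbors of degree at least $8$, and no $3$-vertex is adjacent to a $2$-vertex; and even these do not suffice locally. The paper must introduce ``special'' $5$-faces and configurations (a $5$-face with three $2$-vertices and two nonadjacent $8^+$-vertices, and three such faces sharing a $3$-vertex), build an auxiliary hypergraph on the $8^+$-vertices whose hyperedges are these special structures, and prove a \emph{global} statement --- every component of that hypergraph contains a vertex $v$ with $d(v)-\hat d(v)\ge 8$ --- via a recursive recoloring procedure, before a root-and-orientation scheme decides which $8^+$-vertex pays for which special structure. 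None of this is foreseeable from, or replaceable by, the generic scheme you describe, so the proposal has a genuine gap at its core.

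Two further specific problems with your setup. First, minimizing $|V(G)|$ alone is too weak: the key reducible-configuration arguments replace a low-degree vertex by several new $2$-vertices, producing a graph with \emph{more} vertices but fewer $3^+$-vertices; to invoke minimality on that graph one needs an order like the paper's $\preceq$, which compares the number of $3^+$-vertices as well. Second, your appeal to ``standard arguments'' for $2$-connectivity is not justified for $(0,6)$-colorings: gluing colorings of blocks at a cut vertex can force that vertex, if colored $k$ in several blocks, to exceed the degree bound, and the paper indeed only establishes connectivity and instead defines face degrees so as to handle bridges. So both the foundation (the minimality order, the structure of faces) and the heart (the $5$-face/special-structure analysis) of the argument are missing or incorrect as stated.
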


\begin{theorem} \label{t-}
For every $k\ge1$, either every graph in $\cal C$ is $(0,k)$-colorable, or deciding whether a graph in $\cal C$ is $(0,k)$-colorable is NP-complete.
\end{theorem}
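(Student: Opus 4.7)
The plan is to prove a dichotomy: whenever $\mathcal{C}$ contains a graph that is not $(0,k)$-colorable, this single ``non-colorable witness'' can be bootstrapped into a reduction from a classical planar NP-complete problem; otherwise the decision problem is trivially in P (always answer ``yes''). Membership in NP is immediate since a candidate partition can be verified in polynomial time, so the only real task is NP-hardness under the non-triviality assumption.

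Suppose therefore that some $H_0 \in \mathcal{C}$ is not $(0,k)$-colorable, and choose $H_0$ minimal for this property. Then $H_0$ is vertex-critical: for every $v \in V(H_0)$, the graph $H_0 - v$ admits a $(0,k)$-coloring but no such coloring extends to $v$. Fixing such a vertex $v$, the graph $H_0 - v$ together with terminals $N_{H_0}(v)$ becomes a ``forcing widget'' in $\mathcal{C}$: in every $(0,k)$-coloring, the colors of the terminals are constrained so that $v$ cannot be consistently added as either blue (the $0$-class) or red (the $k$-class). I would then compose copies of this widget into wire, variable, and clause gadgets for a planar SAT-style reduction (for instance from Planar $1$-in-$3$-SAT or planar NAE $3$-SAT). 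All interconnections between copies are realized by subdividing connecting edges; subdivision preserves planarity and cannot create new short cycles, so the assembled instance stays in $\mathcal{C}$, while a valid $(0,k)$-coloring of the assembled graph corresponds to a satisfying assignment of the SAT formula.

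The main obstacle is gadget engineering when no structural information about $H_0$ is available beyond vertex-criticality: the forcing property of $H_0 - v$ can be ``degenerate'', in the sense that the extension to $v$ may fail for several incomparable reasons (e.g.\ a blue neighbour of $v$, or too many red neighbours, or a saturated red neighbour), so a single widget does not automatically yield a clean Boolean constraint. The standard remedy is to compose several copies of $H_0 - u$ for various choices of $u \in V(H_0)$, iteratively cutting down the set of admissible terminal colorings until a two-valued ``wire'' gadget is obtained; a similar composition produces an OR-like clause gadget. Once these are available, the planar SAT reduction goes through in the usual way, yielding NP-hardness and establishing the dichotomy.
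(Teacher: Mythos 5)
Your high-level framing (trivial algorithm when no non-colorable graph exists, NP-hardness otherwise, NP membership being immediate) matches the paper, but the heart of your argument --- converting a minimal non-$(0,k)$-colorable graph $H_0\in{\cal C}$ into wire, variable, and clause gadgets for a planar SAT reduction --- is precisely the step you leave unproven, and it is where all the difficulty sits. Vertex-criticality only tells you that $H_0-v$ has \emph{some} nonempty set of admissible colorings of the terminals $N_{H_0}(v)$, none of which extends to $v$; this set can be arbitrary, the terminals can be numerous, and they need not lie on a common face, so there is no guarantee they can even be wired to the outside while preserving planarity. Your proposed remedy --- iteratively composing copies of $H_0-u$ for various $u$ ``until a two-valued wire gadget is obtained'' --- is asserted, not proved: nothing in your sketch shows that such a composition converges to a clean Boolean behavior, nor that the composed graphs remain $C_3$-, $C_4$-, $C_6$-free and planar with terminals still accessible. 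This is the entire content of the hardness proof, not a routine step. A secondary flaw: of your two suggested source problems, planar NAE $3$-SAT is polynomial-time solvable, so only the Planar $1$-in-$3$-SAT option is a legitimate starting point.

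The paper closes exactly this gap by structural analysis of the minimal counterexample instead of generic gadget engineering. Taking $H_k$ minimal with respect to the order $\preceq$, it shows $H_k$ has minimum degree at least $2$ and that every $(k+1)^-$-vertex has a $(k+2)^+$-neighbor; then a short discharging argument, using that $H_k$ is planar with girth at least $5$ and hence has maximum average degree below $\tfrac{10}{3}$, forces $H_k$ to contain a $2$-vertex $v$ with neighbors $u_1,u_5$. Minimality then gives a completely explicit two-terminal forcing property: in every $(0,k)$-coloring of $H_k-v$, the terminals $u_1,u_5$ receive distinct colors and the $k$-colored one is saturated. Replacing $v$ by a path $u_1u_2u_3u_4u_5$ yields a colorable gadget $H'_k\in{\cal C}$ in which $u_3$ is always colored $k$ with exactly one $k$-colored neighbor, and attaching $k-1$ copies of $H'_k$ to each vertex of an instance reduces the known NP-complete problem of $(0,1)$-coloring planar graphs of girth at least $9$ to $(0,k)$-coloring in ${\cal C}$ --- no SAT gadgets are needed at all. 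Without either this kind of structural extraction of a small, explicitly characterized gadget, or a genuine proof that your iterative widget composition succeeds, your proposal remains a plan rather than a proof.
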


In addition, we construct a graph in $\cal C$ that is not $(0,3)$-colorable in Section~\ref{sec:ce}. This graph and Theorem~\ref{t-} imply the following.
\begin{corollary}
Deciding whether a graph in $\cal C$ is $(0,3)$-colorable is NP-complete.
\end{corollary}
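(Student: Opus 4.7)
The plan is to view the corollary as an immediate consequence of the dichotomy stated in Theorem~\ref{t-}, applied with the specific value $k=3$. That theorem says that either every graph in $\cal C$ is $(0,3)$-colorable, or the decision problem is NP-complete; so the only work is to rule out the first alternative.

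To rule it out, I would invoke the graph constructed in Section~\ref{sec:ce}. It is announced there as a member of $\cal C$ (hence planar and free of cycles of length $3$, $4$, and $6$) that admits no $(0,3)$-coloring. Once that concrete witness is in hand, the first alternative of the dichotomy fails, so the second one must hold, and the corollary follows in one line. Note that $(0,3)$-colorability is in NP (a partition of $V(G)$ and a certificate of maximum degrees is checkable in polynomial time), so only the hardness half needed to be established, and that is exactly what Theorem~\ref{t-} provides.

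The main obstacle is not in this short derivation but in the two ingredients it consumes. Theorem~\ref{t-} itself, which one would attack by a gadget reduction (likely from a variant of planar SAT or planar (not-all-equal) 3-colorability) using $(C_3,C_4,C_6)$-free planar building blocks that faithfully transmit a Boolean value through the $(0,k)$ constraint, is the technical heart. The second ingredient, the explicit $(0,3)$-uncolorable graph, is also delicate: one must engineer a planar graph with sufficiently rigid local structure to force a vertex of degree at least $4$ in the second color class, while still avoiding the forbidden short cycles. Assuming both are granted, however, the corollary as stated requires nothing more than the clean application of the dichotomy to $k=3$.
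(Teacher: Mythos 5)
Your proposal is correct and matches the paper's own derivation: the corollary is obtained exactly by combining the dichotomy of Theorem~\ref{t-} (with $k=3$) with the explicit non-$(0,3)$-colorable graph of $\cal C$ constructed in Section~\ref{sec:ce}, which rules out the ``always colorable'' alternative. (Your speculation about how Theorem~\ref{t-} itself is proved differs from the paper, which reduces from $(0,1)$-colorability of planar graphs of girth at least $9$, but that is outside the scope of the corollary.)
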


Since we deal with $(0,k)$-colorings for some $k\ge2$, we denote by the letter $0$ the color of the vertices that induce
the independent set and we denote by the letter $k$ the color of the vertices that induce the graph with maximum degree $k$.

\section{Proof of Theorem~\ref{t+}} \label{sec:t+}
The proof will be using the discharging method.
For every plane graph $G$, we denote by $V(G)$ the set of vertices of $G$, by $E(G)$ the set of edges of $G$, and by $F(G)$ the set of faces of $G$.

Let us define the partial order $\preceq$. Let $n_3(G)$ be the number of $3^+$-vertices in $G$.
For any two graphs $G_1$ and $G_2$, we have $G_1\prec G_2$ if and only if at least one of the following conditions holds:
\begin{itemize}
\item $|V(G_1)|<|V(G_2)|$ and $n_3(G_1)\le n_3(G_2)$.
\item $n_3(G_1)<n_3(G_2)$.
\end{itemize} 
Note that the partial order $\preceq$ is well-defined and is a partial linear extension of the subgraph poset.

We suppose for contradiction that $G$ is a graph in $\cal C$ that is not $(0,6)$-colorable and is minimal according to $\preceq$.
Let $n$ denote the number of vertices, $m$ the number of edges and $f$ the number of faces of $G$.
For every vertex $v$, the degree of $v$ in $G$ is denoted by $d(G)$.
For every face $\alpha$, the \emph{degree} of $\alpha$, denoted $d(\alpha)$,
is the number of edges that are shared between this face and another face, plus twice the number of edges that are entirely in $\alpha$.
More generally, when counting the number of edges of a certain type in a face, we will always count twice the edges that are only in this face.
For all $d$, let us call a vertex of $G$ of degree $d$, at most $d$, and at least $d$ a \emph{$d$-vertex}, a \emph{$d^-$-vertex}, and a \emph{$d^+$-vertex} respectively.
For all vertex $v$, a \emph{$d$-neighbor}, a \emph{$d^-$-neighbor}, and a \emph{$d^+$-neighbor} of $v$ is a neighbor of $v$ that is a $d$-vertex, a $d^-$-vertex,
and a $d^+$-vertex respectively. For all $d$, let us call a face of $G$ of degree $d$, at most $d$, and at least $d$ a \emph{$d$-face}, a \emph{$d^-$-face},
and a \emph{$d^+$-face} respectively. For all set $S$ of vertices, an \emph{$S$-vertex} is a vertex that belongs to $S$, and an \emph{$S$-neighbor} of a vertex $v$
is a neighbor of $v$ that belongs to $S$.
For all set $S$ of vertices, let $G[S]$ denote the set of vertices induced by $S$, and $G-S=G[V(G)\setminus S]$. For convenience, we will note $G-v$ for $G-\acc{v}$.

Let us first prove some results on the structure of $G$, and then we will prove that $G$ cannot exist, thus proving the theorem.

\begin{lemma} \label{l_co}
$G$ is connected.
\end{lemma}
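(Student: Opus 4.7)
The plan is a standard minimality argument relying directly on the order $\preceq$. Suppose for contradiction that $G$ is not connected, and let $H$ be one of its connected components, so that $G = H \cup G'$ where $G' = G - V(H)$ and there are no edges between $H$ and $G'$.

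First I would observe that both $H$ and $G'$ lie in $\cal C$, since being planar and $(C_3,C_4,C_6)$-free is preserved under taking subgraphs. Next I would check that $H \prec G$ and $G' \prec G$: since $H$ is a nonempty proper subgraph of $G$ (as $G$ is disconnected, $G'$ is also nonempty), we have $|V(H)| < |V(G)|$ and $|V(G')| < |V(G)|$, while clearly $n_3(H) \le n_3(G)$ and $n_3(G') \le n_3(G)$ because every $3^+$-vertex of $H$ or $G'$ is a $3^+$-vertex of $G$. Thus the first clause in the definition of $\prec$ applies to both.

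By minimality of $G$ with respect to $\preceq$, both $H$ and $G'$ admit a $(0,6)$-coloring. I would then take the union of these two colorings: since there is no edge between $V(H)$ and $V(G')$, no vertex gains any new neighbor, so the color $0$ part remains independent and the color $6$ part keeps maximum degree at most $6$. This yields a $(0,6)$-coloring of $G$, contradicting the choice of $G$. There is no real obstacle here; the only thing to verify carefully is that the partial order $\preceq$ is actually decreased by passing to a proper component, which is immediate from the first clause of its definition.
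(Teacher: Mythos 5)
Your argument is correct and is essentially the paper's proof: each connected component (or the complement of one) is smaller in $\preceq$, hence $(0,6)$-colorable by minimality, and the union of the colorings colors $G$ since no edges join the parts. The extra care you take in verifying the first clause of the definition of $\prec$ is fine but not a departure from the paper's route.
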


\begin{proof}
If $G$ is not connected, then every connected component of $G$ is smaller than $G$ and thus admits a $(0,6)$-coloring.
The union of these $(0,6)$-colorings gives a $(0,6)$-coloring of $G$, a contradiction.
\end{proof}

\begin{lemma} \label{l_1v}
$G$ has no $1$-vertex.
\end{lemma}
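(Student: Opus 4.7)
The plan is a standard minimal-counterexample reduction: remove a hypothetical $1$-vertex, invoke minimality on the smaller graph, then extend the coloring back. Suppose for contradiction that $G$ has a $1$-vertex $v$, whose unique neighbor we call $u$, and let $G' = G - v$.

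First I would verify that $G' \prec G$. Clearly $|V(G')| = |V(G)| - 1 < |V(G)|$. For the side condition $n_3(G') \le n_3(G)$, note that deleting $v$ changes no degree except that of $u$, which drops by $1$; since $v$ itself has degree $1$ (so $v$ is not a $3^+$-vertex of $G$), the set of $3^+$-vertices of $G'$ is obtained from that of $G$ by possibly removing $u$ (if $u$ was a $3$-vertex of $G$). So $n_3(G') \le n_3(G)$, and the first bullet of the definition of $\preceq$ gives $G' \prec G$. Also $G'$ inherits planarity and $(C_3,C_4,C_6)$-freeness, so $G' \in \cal C$.

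By minimality of $G$, the graph $G'$ admits a $(0,6)$-coloring $c$. I extend $c$ to a coloring of $G$ as follows: if $c(u) = 0$, set $c(v) = 6$; if $c(u) = 6$, set $c(v) = 0$. In the first case, $v$ has no color-$0$ neighbor and its color-$6$ degree is $0 \le 6$, while no vertex other than $v$ changes its color-class degree. In the second case, $v$ has no color-$0$ neighbor (so the color-$0$ class remains independent), and the color-$6$ degree of $u$ is unchanged because $v$ is colored $0$. Either way, this is a valid $(0,6)$-coloring of $G$, contradicting the assumption that $G$ is not $(0,6)$-colorable.

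There is no substantive obstacle here; the only point that requires a moment's care is confirming that $G' \prec G$ under the somewhat nonstandard order $\preceq$, and this is immediate because the deleted vertex has degree $1$ rather than $3$ or more.
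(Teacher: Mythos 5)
Your proof is correct and follows the same route as the paper: delete the $1$-vertex, apply minimality (with the same observation that $n_3$ does not increase, so $G-v\prec G$), and color $v$ with the color opposite to that of its unique neighbor. The extra care you take in checking the order $\preceq$ and the validity of the extension is fine but does not change the argument.
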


\begin{proof}
Let $v$ be a $1$-vertex and $w$ be the neighbor of $v$.
The graph $G-v$ admits a $(0,6)$-coloring since $G-v\prec G$.
We get a $(0,6)$-coloring $G$ by assigning to $v$ the color distinct from the color of $w$, a contradiction.
\end{proof}

\begin{lemma} \label{l_sb}
Every $7^-$-vertex of $G$ has a $8^+$-neighbor.
\end{lemma}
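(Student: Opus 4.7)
The plan is to reason by contradiction: let $v$ be a $7^-$-vertex all of whose neighbors are also $7^-$-vertices, and derive a $(0,6)$-coloring of $G$. Since $G-v$ is a subgraph of $G$ it lies in $\cal C$, and $G-v\prec G$ because $|V(G-v)|<|V(G)|$ and $n_3(G-v)\le n_3(G)$; by minimality of $G$, there is a $(0,6)$-coloring $\phi$ of $G-v$, and the goal is to extend $\phi$ to $v$.

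If every neighbor of $v$ is colored $6$ under $\phi$, then setting $\phi(v)=0$ gives a valid $(0,6)$-coloring of $G$, a contradiction. Otherwise some neighbor of $v$ has color $0$, so $v$ has at most $d(v)-1\le 6$ neighbors of color $6$, and I would try to set $\phi(v)=6$. The only obstruction to a direct extension is a color-$6$ neighbor $w$ of $v$ that already has $6$ color-$6$ neighbors in $G-v$, so that adding $v$ would push $w$'s color-$6$ degree to $7$.

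To neutralize this, I would split the color-$6$ neighbors of $v$ into $T$ (those with color-$6$ degree at most $5$ in $G-v$) and $S$ (those with color-$6$ degree exactly $6$ in $G-v$). For $w\in S$, the bound $d(w)\le 7$ forces $w$ to have degree exactly $6$ in $G-v$ with every $G-v$-neighbor of $w$ colored $6$. The plan is then to recolor every $w\in S$ from $6$ to $0$ and set $\phi(v)=6$.

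The main obstacle is confirming that this modified assignment remains a valid $(0,6)$-coloring, and each verification reduces to the $C_3$-freeness of $G$. The two delicate points are: no two vertices of $S$ may be adjacent (else recoloring them both to $0$ creates an edge between two color-$0$ vertices), and no $G-v$-neighbor of any $w\in S$ may itself be a neighbor of $v$ (else such a neighbor would lie in $S$ and its recoloring would spoil $w$'s independence condition). Both failures would produce a triangle through $v$ in $G$, which is forbidden in $\cal C$. Granted these, $v$ ends up with $|T|\le 6$ color-$6$ neighbors, each $w\in S$ has all its $G$-neighbors colored $6$ and is a valid color-$0$ vertex, and each $w'\in T$ has color-$6$ degree at most $5+1=6$ in the new coloring, which yields the desired contradiction.
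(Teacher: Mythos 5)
Your proof is correct and follows essentially the same route as the paper: take a $(0,6)$-coloring of $G-v$ by minimality, recolor the ``saturated'' $k$-colored neighbors of $v$ to $0$, and then color $v$ with $k$ (or with $0$ when all neighbors are colored $k$). The only cosmetic difference is that you verify the safety of the recoloring via $C_3$-freeness, whereas the paper recolors exactly the neighbors having no neighbor colored $0$, which makes the recoloring automatically valid and coincides with your set $S$ up to harmless extra vertices.
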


\begin{proof}
Let $v$ be a $7^-$-vertex with no $8^+$-neighbors. The graph $G-v$ admits a $(0,6)$-coloring since $G-v\prec G$.
If there is a neighbor $w$ of $v$ with no neighbor colored $0$, then we color $w$ with $0$.
Thus, we can assume that every neighbor of $v$ that is colored $k$ has a neighbor colored $0$ in $G-v$, and thus at most $5$ neighbors colored $k$ in $G-v$.
Also, we can assume that $v$ has at least one neighbor colored $0$, since otherwise $v$ can be colored $0$.
Thus, $v$ has at most $6$ neighbors colored $k$ and $v$ can be colored $k$, a contradiction.
\end{proof}

\begin{lemma} \label{l_3+bb}
Every vertex with degree at least $3$ and at most $7$ has two $8^+$-neighbors.
\end{lemma}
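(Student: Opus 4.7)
My plan is to adapt the strategy of Lemma~\ref{l_sb}. Assume for contradiction that a vertex $v$ has $3 \le d(v) \le 7$ and at most one $8^+$-neighbor; by Lemma~\ref{l_sb}, $v$ has exactly one $8^+$-neighbor, call it $u$, and the remaining $d(v)-1$ neighbors $v_1,\ldots,v_{d(v)-1}$ are $7^-$-vertices. Since $G - v \prec G$, the graph $G - v$ admits a $(0,6)$-coloring $\phi$, which I will modify so that it extends to a $(0,6)$-coloring of $G$, contradicting the minimality of $G$.

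First I would preprocess $\phi$ exactly as in the proof of Lemma~\ref{l_sb}: for each $k$-colored neighbor $w$ of $v$ having no $0$-colored neighbor in $G-v$, recolor $w$ to $0$. These recolorings are valid and can be applied independently because $G$ is $C_3$-free, so the neighbors of $v$ are pairwise non-adjacent. Afterwards, every $k$-colored neighbor of $v$ has a $0$-colored neighbor in $G - v$. If $v$ still has no $0$-colored neighbor, coloring $v$ with $0$ already finishes the argument; otherwise $v$ has at most $d(v) - 1 \le 6$ $k$-colored neighbors. For any $7^-$-neighbor $v_i$ of $v$ colored $k$, the bound $d(v_i) \le 7$ together with the guaranteed $0$-neighbor gives at most $5$ $k$-neighbors of $v_i$ in $G - v$, which is enough to accept $v$ as a new $k$-neighbor. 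Hence the only possible obstruction to coloring $v$ with $k$ is that $u$ is colored $k$ and already has exactly $6$ $k$-neighbors in $G - v$.

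The main obstacle is to resolve this critical case by a further modification of $\phi$ around $u$. If some $k$-colored neighbor $y$ of $u$ has no $0$-colored neighbor in $G - v$, recoloring $y$ to $0$ drops $u$'s $k$-count to $5$ and the extension of $v$ as $k$ now succeeds. Otherwise, each of $u$'s six $k$-colored neighbors $y_1,\ldots,y_6$ carries its own $0$-colored neighbor $z_i$, and the $C_4$-freeness of $G$ forces the $z_i$'s to be pairwise distinct (any repetition $z_i = z_j$ would yield a $C_4$ through $u, y_i, z_i, y_j$). In this sub-case my plan is to perform a simultaneous swap around some $y_i$, recoloring $y_i$ to $0$ and each $0$-colored neighbor of $y_i$ to $k$, so that $u$ loses one $k$-neighbor while the independence of color $0$ is preserved at $y_i$. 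Verifying that this swap still yields a valid $(0,6)$-coloring, in particular bounding the new $k$-counts of the flipped $0$-vertices and their neighbors, is the delicate step: this is where the $C_3$-, $C_4$-, and $C_6$-free conditions have to be used to preclude cascading overflow in a second shell around $u$, and I expect this verification to be the hardest part of the proof.
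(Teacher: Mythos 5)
Your reduction to the critical case is fine and matches the easy part of the paper's argument: after the same preprocessing as in Lemma~\ref{l_sb}, the only obstruction is that the unique $8^+$-neighbor $u$ is colored $k$, has exactly six $k$-colored neighbors in $G-v$, and each of these already has a $0$-colored neighbor. But that case is the entire content of the lemma, and your proposal does not actually resolve it. The ``simultaneous swap'' you sketch --- recolor some $y_i$ to $0$ and flip every $0$-colored neighbor of $y_i$ to $k$ --- is not justified and there is no reason to expect the forbidden cycles to rescue it: a flipped vertex $z$ (a former $0$-neighbor of $y_i$) can be a vertex of large degree whose other neighbors are almost all colored $k$, so $z$ itself, or its $k$-colored neighbors, can exceed the degree bound $6$; these vertices lie at distance $2$ or $3$ from $u$, where the absence of $C_3$, $C_4$, $C_6$ gives no control over colors or degrees, and no minimality argument is available to you there since you only used minimality to color $G-v$. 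So there is a genuine gap, and it is exactly at the step you flag as ``the hardest part.''

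The paper handles this case by a completely different mechanism, and this is what your approach is missing. Instead of recoloring inside $G-v$, it builds an auxiliary graph $H$ from $G-v$ by adding $d-1$ new $2$-vertices $v_1,\dots,v_{d-1}$, each joined to $w$ (your $u$) and to one $w_i$; cycles of $H$ project to cycles of $G$ of the same length or shorter by $2$, so $H\in\cal C$, and crucially $n_3(H)=n_3(G)-1$, so $H\prec G$ --- this is precisely why the order $\preceq$ is defined through the number of $3^+$-vertices rather than the number of vertices. A $(0,6)$-coloring of $H$ then carries exactly the information you cannot obtain by local swaps: if $w$ is colored $0$ all $v_i$'s are forced to $k$ and $v$ can be colored $k$; if $w$ is colored $k$ and some $v_i$ is colored $k$, then $w$ has at most five $k$-neighbors inside $G-v$, so again $v$ can be colored $k$; and in the remaining situation all $v_i$'s are colored $0$, forcing all $w_i$'s to $k$, so $v$ can be colored $0$. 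If you want to complete your proof, you need this (or an equivalent) global use of minimality via the modified graph; the purely local recoloring around $u$ does not go through.
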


\begin{proof}
Suppose for contradiction that $G$ contains a $d$-vertex $v$ such that $3\le d\le7$ and such that $v$ has at most one $8^+$-neighbor.
By Lemma~\ref{l_sb}, $v$ has exactly one $8^+$-neighbor $w$. Let $w_1,\ldots,w_{d-1}$ be the other neighbors of $v$.
Let $H$ be the graph obtained from $G-v$ by adding $d-1$ $2$-vertices $v_1,\ldots,v_{d-1}$, such that for every $i\in\acc{1,d-1}$, $v_i$
is adjacent to $w$ and $w_i$. 

Notice that $H\prec G$ since $n_3(H)=n_3(G)-1$.
Moreover, every cycle of length $\ell$ in $H$ is associated a cycle of length $\ell$ or $\ell-2$ in $G$.
Therefore $H\in {\cal C}$, so $H$ has a $(0,6)$-coloring. 

If $w$ is colored $0$, then every $v_i$ is colored $k$,
coloring $v$ with $k$ leads to a $(0,6)$-coloring of $G$, a contradiction.
Therefore $w$ is colored $k$.

While at least one of the $w_i$'s has no neighbor colored $0$ in $G-v$, we color it $0$,
and color the corresponding $v_i$ with $k$ if it was colored $0$. 
By doing this, we keep a $(0,6)$-coloring of $H$.
We can thus assume that in $G-v$, every $w_i$ that is colored $k$ has a neighbor colored $0$ and thus at most five neighbors colored $k$.
If at least one of the $v_i$'s is colored $k$, then $w$ has at most five neighbors colored $k$ in $G-v$,
and assigning $k$ to $v$ gives a $(0,6)$-coloring of $G$.
Otherwise, every $v_i$ is colored $0$, every $w_i$ is colored $k$, and $w$ is colored $k$.
Thus we assign $0$ to $v$ to obtain a $(0,6)$-coloring of $G$, a contradiction.
\end{proof}

\begin{lemma} \label{l_23}
No $3$-vertex is adjacent to a $2$-vertex.
\end{lemma}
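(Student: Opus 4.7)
Suppose for contradiction that a $3$-vertex $v$ is adjacent to a $2$-vertex $u$. The plan is as follows.

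First, we locate the relevant neighborhood. By Lemma~\ref{l_3+bb}, $v$ has two $8^+$-neighbors $w_1,w_2$, so its third neighbor must be $u$. By Lemma~\ref{l_sb} applied to $u$, the other neighbor $w_3$ of $u$ is also an $8^+$-vertex. The absence of $C_3$ and $C_4$ in $G$ forces $w_1,w_2,w_3$ to be pairwise distinct and pairwise non-adjacent: for instance, $w_1=w_3$ or $w_1w_3\in E(G)$ would produce a $C_3$ or $C_4$ through $v$ and $u$, and $w_1w_2\in E(G)$ would yield a $C_3$ with $v$.

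Next, we consider $H=G-u$, in which $v$ becomes a $2$-vertex. Since $|V(H)|<|V(G)|$ and $n_3(H)=n_3(G)-1$, we have $H\prec G$; since $H$ is a subgraph of $G$, $H\in\mathcal{C}$. By minimality $H$ admits a $(0,6)$-coloring $c$, which we will extend to $G$ by choosing $c(u)\in\{0,k\}$, possibly after a local recoloring. A case analysis on $(c(v),c(w_3))$ handles most scenarios directly: one sets $c(u)=0$ when $c(v)=c(w_3)=k$ and $c(u)=k$ otherwise, relying on the facts that $u$ has degree $2$ in $G$ (so $c(u)=k$ adds at most one to $v$'s and to $w_3$'s $k$-degree) and that $v$'s $H$-$k$-degree is at most $2$. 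The only configuration where this extension fails is $c(v)=0$, $c(w_3)=k$, and $w_3$ having exactly six $k$-colored neighbors in $H$: then $c(u)=0$ is blocked by $c(v)=0$, while $c(u)=k$ would push $w_3$'s $k$-degree to $7$.

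To dismantle this single obstruction, note that $c(v)=0$ forces $c(w_1)=c(w_2)=k$. If $w_3$ has no $0$-colored neighbor in $H$, we recolor $w_3$ to $0$ and set $c(u)=k$. Otherwise, we attempt the swap $c(v)\leftarrow k$, which is a valid modification of $c$ on $H$ whenever neither $w_1$ nor $w_2$ is at its $k$-degree cap, and then set $c(u)=0$. If some $w_i\in\{w_1,w_2\}$ is tight, we perform the joint swap $(v,w_i)\mapsto(k,0)$, valid because $v$ is a $0$-neighbor of $w_i$ in $H$; should $w_i$ have further $0$-colored neighbors, we first recolor each of them to $k$ (safe because a $0$-colored vertex has all its neighbors already $k$-colored). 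The main technical obstacle is verifying that this cascade of recolorings always terminates in a valid $(0,6)$-coloring of $H$; the analysis leans on the pairwise non-adjacency of $w_1,w_2,w_3$, which ensures that the recolorings around any one of them do not interfere with the constraints at the other two.
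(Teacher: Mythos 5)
There is a genuine gap, and it sits exactly where you flag it: the case $c(v)=0$, $c(w_3)=k$ with $w_3$ at its cap, combined with $w_1$ or $w_2$ also at its cap. Your repair step is not valid as stated. Recoloring a $0$-coloured neighbour $z$ of $w_i$ to $k$ is not ``safe'': precisely because all neighbours of a $0$-vertex are coloured $k$, such a $z$ of degree at least $8$ (which is entirely possible, e.g.\ another $8^+$-vertex) would instantly acquire at least $7$ $k$-neighbours; and even when $z$ is small, every $k$-coloured neighbour of $z$ gains a $k$-neighbour and may overflow. The appeal to pairwise non-adjacency of $w_1,w_2,w_3$ does not control this, because interference propagates through \emph{common neighbours}, which are not excluded: $w_1$ and $w_3$ may share a neighbour $z\ne v,u$ (the cycle $zw_1vuw_3$ has length $5$, which is allowed in $\cal C$), so recolouring $z$ to $k$ can push $w_3$ from six to seven $k$-neighbours --- and $w_3$ is the very vertex you must keep tight to later colour $u$. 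So the ``cascade'' is not shown to terminate in a valid colouring, and individual steps of it can already be illegal; this is not a technicality but the actual content of the lemma. The underlying problem is structural: in $H=G-u$ the vertex $w_3$ loses a neighbour, so a colouring of $H$ may use all six of its allowed $k$-neighbours with no slack left for $u$, and a minimal counterexample could conceivably force every colouring of $H$ into this shape; a local recolouring argument would need a genuinely new idea to rule that out.

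The paper avoids this by not deleting a single vertex. It removes both the $2$-vertex and the $3$-vertex and glues onto the three outside $8^+$-vertices ($u$, $x_1$, $x_2$ in the paper's notation) an $8$-cycle of new $2$-vertices. In that auxiliary graph each of these big vertices has degree one \emph{larger} than in $G$, so whatever number of $k$-neighbours they have in a colouring of $H$, the extension back to $G$ never exceeds it; the case analysis then closes without any recolouring cascade. If you want to salvage a deletion-style proof you would need to build in comparable slack (as Lemma~\ref{l_3+bb} does with its $v_i$-gadget), rather than hope to repair a tight colouring after the fact.
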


\begin{proof}
Let $w$ be a $3$-vertex adjacent to a $2$-vertex $v$, let $x_1$ and $x_2$ be the other two neighbors of $w$, and let $u$ be the other neighbor of $v$.
Let $H$ be the graph obtained from $G-\acc{v,w}$ by adding five $2$-vertices $v_1$, $v_2$, $w_1$, $w_2$, and $x$
which form the $8$-cycle $uv_1w_1x_1xx_2w_2v_2$. It is easy to check that $H$ is in $\cal C$.
By Lemmas~\ref{l_sb} and~\ref{l_3+bb}, $u$, $x_1$, and $x_2$ are $8^+$-vertices in $G$ and thus are $9^+$-vertices in $H$.
Since $w$ is in $G$ but not in $H$, $n_3(H)=n_3(G)-1$, so $H\prec G$.
Therefore $H$ has a $(0,6)$-coloring.

Suppose that $v_1$ and $v_2$ are both colored $0$. Then $w_1$, $w_2$, and $u$ are colored $k$.
We color $v$ with $0$ and $w$ with $k$. The number of neighbors of $x_1$ (resp. $x_2$) colored $k$ in $G$ is at most
the number of neighbors of $x_1$ (resp. $x_2$) colored $k$ in $H$. Thus we have a $(0,6)$-coloring of $G$, a contradiction.
Now we assume without loss of generality that $v_1$ is colored $k$.
We color $w$ with the color of $x$ and we color $v$ with $k$. The number of neighbors of $u$ (resp. $x_1$, $x_2$) colored $k$ in $G$ is at most
the number of neighbors of $u$ (resp. $x_1$, $x_2$) colored $k$ in $H$. Thus we have a $(0,6)$-coloring of $G$, a contradiction.
\end{proof}

A \emph{special face} is a $5$-face with three $2$-vertices and two non-adjacent $8^+$-vertices. See figure~\ref{fig1}, left.
A \emph{special configuration} is three $5$-faces sharing a common $3$-vertex adjacent to three $8^+$-vertices, such that all the other vertices of these faces are $2$-vertices.
See figure~\ref{fig1}, right. We say \emph{special structure} to speak indifferently about a special face or a special configuration.

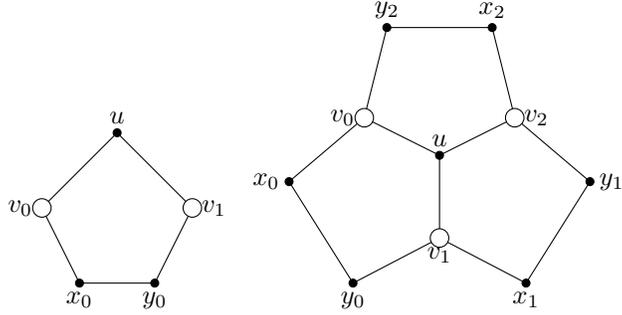
\begin{figure}[t]
\begin{center}
\begin{tikzpicture}
\coordinate (u) at (0,0) ;
\coordinate (v0) at (-1,-1) ;
\coordinate (v1) at (1,-1) ;
\coordinate (x0) at (-0.5,-2) ;
\coordinate (y0) at (0.5,-2) ;

\draw (u) -- (v0);
\draw (u) -- (v1);
\draw (v0) -- (x0);
\draw (v1) -- (y0);
\draw (x0) -- (y0);

\draw [fill=black] (u) circle (1.5pt) ;
\draw [fill=black] (x0) circle (1.5pt) ;
\draw [fill=black] (y0) circle (1.5pt) ;
\draw [fill=white] (v1) circle (3.5pt) ;
\draw [fill=white] (v0) circle (3.5pt) ;

\draw (u) node [above] {$u$} ;
\draw (v0) node [left] {$v_0$} ;
\draw (v1) node [right] {$v_1$} ;
\draw (x0) node [below] {$x_0$} ;
\draw (y0) node [below] {$y_0$} ;
\end{tikzpicture}
\begin{tikzpicture}
\coordinate (u) at (0,0) ;
\coordinate (v0) at (-1,0.5) ;
\coordinate (v1) at (0,-1.10) ;
\coordinate (v2) at (1,0.5) ;
\coordinate (x0) at (-2,-0.35) ;
\coordinate (y0) at (-1.15,-1.7) ;
\coordinate (x1) at (1.15,-1.7) ;
\coordinate (y1) at (2,-0.35) ;
\coordinate (x2) at (0.7,1.7) ;
\coordinate (y2) at (-0.7,1.7) ;

\draw (u) -- (v0);
\draw (u) -- (v1);
\draw (v0) -- (x0);
\draw (v1) -- (y0);
\draw (x0) -- (y0);
\draw (u) -- (v2);
\draw (v1) -- (x1);
\draw (v2) -- (y1);
\draw (x1) -- (y1);
\draw (v2) -- (x2);
\draw (v0) -- (y2);
\draw (x2) -- (y2);

\draw [fill=black] (u) circle (1.5pt) ;
\draw [fill=black] (x0) circle (1.5pt) ;
\draw [fill=black] (y0) circle (1.5pt) ;
\draw [fill=white] (v1) circle (3.5pt) ;
\draw [fill=black] (x1) circle (1.5pt) ;
\draw [fill=black] (y1) circle (1.5pt) ;
\draw [fill=black] (x2) circle (1.5pt) ;
\draw [fill=black] (y2) circle (1.5pt) ;
\draw [fill=white] (v2) circle (3.5pt) ;
\draw [fill=white] (v0) circle (3.5pt) ;

\draw (u) node [above] {$u$} ;
\draw (v0) node [left] {$v_0$} ;
\draw (v1) node [below] {$v_1$} ;
\draw (x0) node [left] {$x_0$} ;
\draw (y0) node [below] {$y_0$} ;
\draw (v2) node [right] {$v_2$} ;
\draw (x1) node [below] {$x_1$} ;
\draw (y1) node [right] {$y_1$} ;
\draw (x2) node [above] {$x_2$} ;
\draw (y2) node [above] {$y_2$} ;
\end{tikzpicture}
\caption{A special face (left) and a special configuration (right). \label{fig1}}
\end{center}
\end{figure}

Let us define a hypergraph $\widehat G$ whose vertices are the $8^+$-vertices of $G$ and the hyperedges correspond to the sets of $8^+$-vertices
contained in the same special structure.
For every vertex $v$ of $\widehat G$, let $\hat d(v)$ denote the degree of $v$ in $\widehat G$, that is the number of hyperedges containing $v$. 

\begin{lemma}\label{l_recol}
Let $\alpha$ be a special stucture, with the notations of Figure~\ref{fig1}.
Consider a $(0,6)$-coloring of $\alpha$.

We can change the color of the $x_i$'s, $y_i$'s and $u$ such that the $v_i$'s
have no more neighbors colored $k$ than before, and for all $i$, if $v_i$ is colored $k$, then $v_i$ has a neighbor colored $0$.
\end{lemma}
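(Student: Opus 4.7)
The plan is to prove the lemma by case analysis on the colors of the $v_i$'s, performing (when needed) a recoloring move that changes an internal vertex from $k$ to $0$. The key observation is that the first required condition---no $v_i$ gains any $k$-colored neighbor---is automatic under such moves, since turning a $k$-vertex into a $0$-vertex can only decrease the $k$-degrees of its neighbors. Hence in each case it suffices to check (i) that the recolored vertex is legally allowed to be $0$, i.e.\ that all of its other neighbors are currently $k$, and (ii) that after the move every $v_i$ colored $k$ has at least one $0$-neighbor.

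For the special face I would split on $(c(v_0),c(v_1))$. If both are $0$, then the $2$-vertices $u$, $x_0$, $y_0$ are forced to be $k$ and condition (ii) is vacuous. If exactly one of the $v_i$'s is $k$, say $v_1$, then $u$ and $x_0$ are forced to be $k$ as neighbors of $v_0=0$; only $y_0$ is free, and if $y_0$ is currently $k$ I recolor it to $0$, which is legal because its two neighbors $x_0$ and $v_1$ are both $k$. If both $v_i$'s are $k$ and condition (ii) does not already hold, I recolor $u$ to $0$: this is legal since $u$'s only neighbors are $v_0$ and $v_1$, both $k$, and the move supplies both of them with a $0$-neighbor simultaneously.

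For the special configuration I would case-split on the number of $v_i$'s colored $k$. If none are, nothing is required. If all three are, then unless $u=0$ already, recoloring $u$ to $0$ works: it is legal because all three of $u$'s neighbors are $k$, and all three $v_i$'s gain $u$ as a $0$-neighbor. In the intermediate cases at least one $v_j$ is colored $0$, which forces $u=k$, so $u$ is no longer available. For each $v_i$ colored $k$ that lacks a $0$-neighbor I would choose an internal neighbor of $v_i$ whose other neighbor is itself forced to be $k$ by some $v_\ell=0$; for instance, when $v_0=v_1=k$ and $v_2=0$, the vertex $y_2$ has $v_0$ and $x_2$ as its only neighbors with $x_2$ forced to $k$, so $y_2$ can safely be recolored to $0$ to serve $v_0$, and symmetrically $x_1$ serves $v_1$.

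The main subtlety lies in this intermediate case of the special configuration, where one must check that the two ``safe'' recolorings do not conflict. This works because the internal vertex chosen for each $v_i=k$ lies on a different $5$-face of the structure, and its non-$v_i$ neighbor is forced to be $k$ independently by some $v_\ell=0$ vertex; consequently the two newly recolored $0$-vertices are non-adjacent and the $0$-color class remains an independent set.
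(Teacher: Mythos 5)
Your proposal is correct and follows essentially the same route as the paper: a case analysis on the colors of the $v_i$'s, recoloring $u$ to $0$ when all $v_i$'s are colored $k$, and otherwise using a $v_\ell$ colored $0$ to force an adjacent $2$-vertex to be $k$ so that the next $2$-vertex along the face can safely be recolored $0$ (your $y_2$/$x_1$ pair is the paper's $y_0$/$x_2$ up to rotation). The paper's proof is just a terser version of the same argument, relying on the observation you also make that only $k\to 0$ changes are performed.
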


\begin{proof}
If all of the $v_i$'s are colored $0$, then there is noting to do. If they are all colored $k$, then we assign $0$ to $u$.
If one of the $v_i$'s, say $v_0$, is colored $0$ and another one, say $v_1$, is colored $k$, then $u$ and $x_0$ are colored $k$ and we assign $0$ to $y_0$.
Moreover, if $\alpha$ is a special configuration and $v_2$ is colored $k$, then $y_2$ is colored $k$ and we assign $0$ to $x_2$.
\end{proof}

\begin{lemma}\label{l_dhat}
For every vertex $v$ in $\widehat G$, $d(v)-\hat d(v)\ge 7$.
\end{lemma}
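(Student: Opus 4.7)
The plan is to argue by contradiction: assume $d(v)-\hat d(v)\le 6$ for some vertex $v$ of $\widehat G$ and produce a $(0,6)$-coloring of $G$, contradicting the minimality of $G$. The key structural ingredient is the claim that distinct special structures containing $v$ are vertex-disjoint away from $v$, and in particular use pairwise disjoint subsets of the $2$-vertex and $3$-vertex neighbours of $v$.

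To prove this disjointness, suppose a common neighbour $w$ of $v$ lies in two structures $S_1$ and $S_2$ and derive a contradiction from $G\in{\cal C}$. If $w$ is a $3$-vertex then $w$ must play the role of the central $3$-vertex of a special configuration, and that centre determines the whole configuration, so $S_1=S_2$. Otherwise $w$ is a $2$-vertex and its unique other neighbour $z$ in $G$ is determined; enumerating the possible roles of $w$ in $S_1$ and $S_2$ (middle or outer $2$-vertex of a special face, or $2$-vertex of a configuration face) produces, in every case, either two distinct $2$-vertex common neighbours of $v$ and some other $8^+$-vertex $v'$ (forcing a $C_4$) or two disjoint length-$3$ paths between $v$ and $v'$ (forcing a $C_6$). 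This case analysis is the main technical obstacle of the proof.

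For the contradiction, let $H=G-v$. Since $H\prec G$, $H$ admits a $(0,6)$-coloring $c$. For each of the $\hat d(v)$ structures $S$ containing $v$, extend $c$ to $\alpha=S$ by temporarily colouring $v$ with $k$ (a valid extension since $v$ has at most three neighbours inside $S$, so its $k$-degree in $\alpha$ is at most $3\le 6$), and apply Lemma~\ref{l_recol} to alter only the colours of the $x_i$'s, $y_i$'s and $u$ of $S$ so that the virtual $v$ acquires a $0$-coloured neighbour inside $S$. All modified vertices have their complete $G$-neighbourhood inside $S$, and by the disjointness claim the structures share no vertex other than $v$, so the successive applications remain globally consistent and yield a modified $(0,6)$-coloring of $H$; the ``$v_i$'s have no more neighbours coloured $k$ than before'' clause of Lemma~\ref{l_recol} keeps every other $8^+$-vertex of $S$ within its $k$-degree budget. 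Disjointness further yields $\hat d(v)$ pairwise distinct $0$-coloured neighbours of $v$, so at most $d(v)-\hat d(v)\le 6$ neighbours of $v$ are coloured $k$, and, together with an iterative re-colouring of the non-structure neighbours analogous to the proof of Lemma~\ref{l_sb} to make sure no $8^+$-vertex is over-saturated once $v$ is added, colouring $v$ with $k$ extends the coloring to a $(0,6)$-coloring of $G$, the desired contradiction.
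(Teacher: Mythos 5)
There is a genuine gap at the final extension step. You delete the $8^+$-vertex $v$ itself, take a $(0,6)$-coloring of $H=G-v$, arrange via Lemma~\ref{l_recol} that $v$ would have a $0$-colored neighbor in each of its $\hat d(v)$ special structures, and then want to put $v$ back with color $k$. Bounding the number of $k$-colored neighbors of $v$ by $d(v)-\hat d(v)\le 6$ is not enough: you also need every $k$-colored neighbor $w$ of $v$ to have at most five $k$-colored neighbors in $G-v$, so that adding $v$ does not over-saturate $w$. Your appeal to ``an iterative re-colouring analogous to the proof of Lemma~\ref{l_sb}'' does not close this: that argument works only because there all neighbors of $v$ are $7^-$-vertices, for which having one $0$-colored neighbor forces at most five $k$-colored neighbors in $G-v$. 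Here $v$ is an $8^+$-vertex and may well have $8^+$-neighbors (the discharging rules explicitly allow edges between two $8^+$-vertices); such a neighbor can be colored $k$, already have a $0$-colored neighbor, and still have six $k$-colored neighbors in $G-v$, and then it can neither be recolored $0$ nor tolerate $v$ being colored $k$. Nor can you fall back on coloring $v$ with $0$, since your procedure has deliberately created $0$-colored neighbors of $v$. This is exactly the obstruction the paper's proof avoids by never removing $v$: it deletes instead the single $2$-vertex $x_0$ of one special structure $\alpha$, so the coloring of $G-x_0$ is globally valid with $v$ already colored; non-extendability then forces $v$ colored $k$ (saturated) and $y_0$ colored $0$, and after the local recolorings (Lemma~\ref{l_recol} on the other structures, plus assigning $0$ to $u$ or $y_2$ in $\alpha$) only $v$'s own $k$-degree has to be checked when $x_0$ is finally colored $k$.

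A secondary weakness is that your ``main technical obstacle'', the claim that distinct special structures containing $v$ meet only in $v$, is only sketched. The claim is true (a $2$-vertex lies on both of its faces through both of its edges, so two special $5$-faces through it would yield a $C_6$; two structures containing $v$ and another common $8^+$-vertex yield a $C_4$ or $C_6$; a $3$-vertex determines its special configuration), and some such disjointness of the recolored interiors is indeed implicitly used by the paper as well, but as written you are deferring rather than proving it, while the genuinely missing ingredient is the choice of what to delete, as explained above.
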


\begin{proof}
Let $v$ be a vertex that does not verify the lemma, i.e. such that $d(v)-\hat d(v)\le 6$. As $v$ is an $8^+$-vertex, $\hat d(v)\ge 1$.
Let $\alpha$ be a special structure incident to $v$ in $\widehat G$.
We use the notations of Figure~\ref{fig1}, with say $v=v_0$. The graph $G-x_0$ is smaller than $G$, thus it admits a $(0,6)$-coloring.
Since $G$ does not admit a $(0,6)$-coloring, $v_0$ is colored $k$ and $y_0$ is colored $0$. 
By Lemma~\ref{l_recol}, we can assume that $v$ has a neighbor colored $0$ in each of its special structures distinct from $\alpha$.
If $v_1$ is colored $0$, then $y_0$ is colored $k$, a contradiction. Thus $v_1$ is colored $k$.
If $\alpha$ is a special face, or if $v_2$ is colored $k$, then we assign $0$ to $u$.
If $\alpha$ is a special configuration and $v_2$ is colored $0$, then $x_2$ is colored $k$ and we assign $0$ to $y_2$.
In both cases, $v$ has at least $\hat d(v)$ neighbors colored $0$.
Thus $v$ has at most $d(v)-\hat d(v)\le 6$ neighbors colored $k$ and we can assign $k$ to $v$, a contradiction.
\end{proof}

\begin{lemma}\label{l_dhat2}
Every component of $\widehat G$ has at least one vertex $v$ such that $d(v)-\hat d(v)\ge 8$.
\end{lemma}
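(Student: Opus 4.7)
The plan is to proceed by contradiction: suppose that some component $C$ of $\widehat G$ has every vertex satisfying $d(v) - \hat d(v) = 7$, and derive a $(0,6)$-coloring of $G$, contradicting the minimality of $G$. Fix $v^* \in C$ and a special structure $\alpha^*$ containing $v^*$, labelled so that $v^* = v_0$ in the notation of Figure~\ref{fig1}. Since $G - x_0 \prec G$ lies in $\cal C$, it admits a $(0,6)$-coloring $\phi$, and the goal is to modify $\phi$ and extend it to a $(0,6)$-coloring of $G$.

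The first step is to mirror the beginning of the proof of Lemma~\ref{l_dhat}: the only obstruction to extending $\phi$ to $x_0$ is the case $\phi(v_0) = k$ with exactly $6$ neighbors colored $k$ in $G - x_0$, $\phi(y_0) = 0$, and hence $\phi(v_1) = k$. After applying Lemma~\ref{l_recol} to every special structure of $\widehat G$, every $k$-colored $8^+$-vertex has at least one $0$-neighbor in each of its special structures. I then distinguish two cases on $\phi(u)$.

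If $\phi(u) = k$, recoloring $u$ to $0$ (valid because $\phi(v_0) = \phi(v_1) = k$, with the special configuration case handled analogously to Lemma~\ref{l_dhat}) drops $v^*$'s $k$-count in $G - x_0$ from $6$ to $5$, allowing the extension $x_0 = k$. If $\phi(u) = 0$, then $v_1$ has both $u$ and $y_0$ as $0$-neighbors in $\alpha^*$, giving it an extra $0$-neighbor beyond the Lemma~\ref{l_recol} guarantee; consequently $v_1$'s total $0$-count is at least $\hat d(v_1) + 1$ and its $k$-count in $G - x_0$ is at most $6$. Whenever this bound is strict, recoloring $y_0$ to $k$ leaves $v_1$ with $k$-count at most $6$, and extending via $x_0 = 0$ completes the coloring.

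The main obstacle is the remaining sub-case: $\phi(u) = 0$ and $v_1$ has $k$-count equal to $6$. Here both $v^*$ and $v_1$ are saturated, meaning all their $0$-neighbors are internal to their structures and none of their external neighbors is colored $0$. My plan to resolve this is a coordinated flip: set $\phi(v^*) = \phi(v_1) = 0$ and recolor their former internal $0$-neighbors to $k$. The absence of external $0$-neighbors of $v^*$ and $v_1$ guarantees that no external $(0,6)$-coloring constraint is violated, while the recolored internal vertices (all $2$- or $3$-vertices of bounded degree) trivially satisfy their own constraints. The hard part is that this recoloring may decrease the $0$-count of other $8^+$-vertices of $C$ through shared internal vertices; if any such $8^+$-vertex is itself saturated, it must likewise be flipped, triggering a cascade within $C$. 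Verifying that this cascade terminates with a globally valid coloring is the technical crux of the proof: the finiteness of $C$ guarantees termination, and the uniform condition $d - \hat d = 7$ across $C$ allows a careful accounting of the shifts in $0$-count and $k$-count at each step. Once the cascade completes, extending to $x_0$ yields the desired $(0,6)$-coloring of $G$ and the required contradiction.
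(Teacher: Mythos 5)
Your reduction to the doubly saturated situation is sound and is essentially where the paper also starts (the paper deletes both $x_0$ and $y_0$, you delete only $x_0$; either way one is forced into the case where the $8^+$-vertices of $\alpha$ are colored $k$ with six $k$-neighbors and, after Lemma~\ref{l_recol}, exactly one $0$-neighbor per special structure). The problem is that everything after that point --- the coordinated flip and the cascade through the component --- is exactly the content of the lemma, and you explicitly defer it (``verifying that this cascade terminates with a globally valid coloring is the technical crux''). The paper's proof is precisely a carefully specified recursive procedure doing this propagation, with the internal $7^-$-vertices of each visited structure uncolored and removed, the other $8^+$-vertices of the structure processed recursively, and the internal vertices recolored consistently when the recursion returns; none of that bookkeeping appears in your sketch, so the proposal is not a proof.

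Moreover, the cascade rule you state (``if an $8^+$-vertex is saturated, flip it to $0$'') is not always available, and this is where a naive cascade breaks. A $k$-colored vertex $w$ of the component with exactly six $k$-neighbors has, since $d(w)-\hat d(w)=7$, exactly $\hat d(w)+1$ neighbors colored $0$; after the Lemma~\ref{l_recol} normalisation only $\hat d(w)$ of these are accounted for by its structures, so the extra one may be an \emph{external} $0$-neighbor (or a second $0$-neighbor in some other structure). Such a $w$ cannot be flipped to $0$ (the external $0$-neighbor violates independence and is not a low-degree internal vertex you may freely recolor), yet it also cannot absorb a seventh $k$-neighbor, so your cascade gets stuck. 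The paper handles exactly this situation as the favorable terminating case of its procedure: instead of flipping $w$, it recolors inside the shared structure (assigning $k$ to $y_0$ and $0$ to $x_0$, etc.), using that $w$ then has at most five (resp.\ four) $k$-neighbors outside that structure; your argument has no counterpart to this move. (Two smaller points: termination needs the observation that flips are monotone, not just finiteness of $C$; and in your $\phi(u)=k$ branch, the special-configuration subcase with $v_2$ colored $0$ and $y_2$ already colored $0$ produces no drop in $v_0$'s $k$-count, so it is not ``analogous to Lemma~\ref{l_dhat}'' and belongs to the hard case as well.)
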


\begin{proof}
Suppose the lemma is false, and let $C$ be a component of $\widehat G$ that does not verify the lemma.
If $C$ has only one vertex, then this vertex is an $8^+$-vertex, which verifies $d(v)-\hat d(v)\ge 8$.
Therefore $C$ has at least one hyperedge, which corresponds to a special structure $\alpha$ of $G$.
By Lemma~\ref{l_dhat}, every vertex of $C$ verifies $d(v)-\hat d(v)=7$. We use the notations of Figure~\ref{fig1}.
The graph $G-\acc{x_0,y_0}$ is smaller than $G$, thus it admits a $(0,6)$-coloring.
Since $G$ admits no $(0,6)$-coloring, $v_0$ and $v_1$ are colored $k$.
If $\alpha$ is a special configuration and $v_2$ is colored $0$, then $x_2$ and $y_1$ are colored $k$ and we can color $y_2$ and $x_1$ with $0$.
Otherwise, we can color $u$ with $0$. Note that $v_0$ and $v_1$, as well as $v_2$ if it exists and is colored $k$, all have six neighbors colored $k$,
and by Lemma~\ref{l_recol}, we can assume that they all have at least one neighbor colored $0$ in each of their special structures besides $\alpha$. 

If one of the $v_i$'s, say $v_0$, has an additional neighbor colored $0$, it verifies $d(v_0)-\hat d(v_0)\ge 8$, a contradiction.
Thus, for every $v_i$, either $v_i$ is colored $0$ or $v_i$ has no neighbor colored $0$ outside of its special structures
and at most one neighbor colored $0$ in each special structure besides $\alpha$.

We uncolor $u$ and all the $x_i$'s and $y_i$'s, and let $H$ equal to $G$ where $u$, the $x_i$'s and the $y_i$'s are removed.
By symmetry, we only consider the vertex $v_0$.
The following procedure either assigns $0$ to $v_0$ or ensures that $v_0$ has two neighbors colored $0$ in one of its special structures:

\begin{itemize}
\item For each special structure $\beta$ containing $v_0$ and completely contained in $H$, we use the notations of Figure~\ref{fig1}, keeping the same vertex for $v_0$,
but changing the other ones for the vertices in $\beta$, and do the following:
\begin{itemize}
 \item By Lemma~\ref{l_recol}, we can assume that every $v_i$ colored $k$ has a neighbor colored $0$ in each of its special structures that are completely contained in $H$. 
 \item Suppose that one of the $8^+$-vertices of $\beta$ distinct from $v_0$, say $v_1$, has two neighbors colored $0$ in a special structure distinct from $\beta$ or a neighbor colored $0$ outside of its special structures.
 Since $d(v_1)-\hat d(v_1)=7$, $v_1$ has at most five neighbors colored $k$ outside of $\beta$ if $\beta$ is a special face, and at most four neighbors colored $k$ outside of $\beta$ if $\beta$ is a special configuration.
 We assign $k$ to $y_0$ $k$ and $0$ to $x_0$. If $v_2$ exists and is colored $0$, then we assign $0$ to $y_2$, and otherwise we assign $0$ to $u$. We end the procedure.
 \item We uncolor the $7^-$-vertices of $\beta$ and remove them from $H$. 
 \item For every $8^+$-vertex $w\ne v_0$ in $\beta$ colored $k$, we apply the procedure with $w$ instead of $v_0$.
 Now $w$ is colored $0$ or has two neighbors colored $0$ in the same special structure.
 \item We add back to $H$ the $7^-$-vertices of $\beta$. If $v_0$ is colored $0$, then we give them color $k$ if they are adjacent to a vertex colored $0$
 and we assign them $0$ otherwise,
 and we end the procedure. If $\beta$ is a special face and $v_1$ is colored $k$, or if $\beta$ is a special configuration and $v_1$ and $v_2$ are colored $k$,
 then we color $u$ and $x_0$ with $0$, we color the other $2$-vertices with $k$, and we end the procedure.
 Suppose $\beta$ is a special configuration, either $v_1$ or $v_2$, say $v_1$, is colored $k$, and the other one is colored $0$.
 We assign $0$ to $x_0$, $x_1$, and $y_2$, and $k$ to $u$, $y_0$, $y_1$, and $x_1$, and we end the procedure.
 Now all of the $v_i$'s distinct from $v_0$ are colored $0$. We color $x_0$ and $y_2$ (if it exists) with $0$ and we color the other $7^-$ vertices in $\beta$ with color $k$.
\end{itemize}
\item Now in each special structure containing $v_0$ and completely contained in $H$, all of the $8^+$-vertices distinct from $v_0$ is colored $0$.
 We assign $0$ to $v_0$ and $k$ to all of the neighbors of $v_0$.
\end{itemize}

Let us prove that the previous procedure terminates. It always calls itself iteratively on a graph with fewer vertices,
thus the number of nested iterations is bounded by the order of the initial graph.
Furthermore, each iteration of the procedure only does a bounded number of calls to the procedure (at most two).
That proves that the procedure terminates.

In the end, if one of the $v_i$'s is colored $k$, then it has at most five neighbors colored $k$ outside of $\alpha$ if $\alpha$ is a special face,
and at most four neighbors colored $k$ outside of $\alpha$ if $\alpha$ is a special structure.
If every $v_i$ is colored $k$, then color $u$ with color $0$ and the other $7^-$-vertex of $\alpha$ with color $k$. Otherwise, assign $k$ to $u$, and do the following:
\begin{itemize}
\item If every $v_i$ is colored $0$, then assign $k$ to the $x_i$'s and the $y_i$'s.

\item If $\alpha$ is a special face and one of the $v_i$'s, say $v_0$, is colored $0$ while the other one is colored $k$, then assign $k$ to $x_0$ and $0$ to $y_0$.

\item If $\alpha$ is a special structure, then assign $k$ to the $y_i$'s, and for all $i\in\acc{0,1,2}$,
 if $v_i$ is colored $k$, then assign $0$ to $x_i$, and if $v_i$ is colored $0$ then assign $k$ to $x_i$.
\end{itemize}
In all cases, we get a $(0,6)$-coloring of $G$, a contradiction.
\end{proof}

For each component $C$ of $\widehat G$, we choose a vertex $v$ in $C$ such that $d(v)-\hat d(v)\ge 8$ as the \emph{root} of $C$.
We then choose an orientation of the edges of $\widehat G$ such that the only vertices with no incoming edges are the roots
(for example do a breadth first search from the root of each component). For each $8^+$-vertex $v$, $v$ is said to \emph{sponsor} all of the special faces
that correspond to its outgoing edges in $\widehat G$.

We are now going to give some weight on the vertices and faces of the graph. Initially, for all $d$, every $d$-vertex has weight $d-4$,
and every $d$-face has weight $d-4$. Thus every face and every $4^+$-vertex has non-negative initial weight. 

We apply the following discharging procedure.
\begin{enumerate}
\item Every $8^+$-vertex gives weight $\frac{1}{2}$ to each of its $7^-$-neighbors, to each special face it sponsors,
and to the $3$-vertex of each special configuration it sponsors. Additionally, for every edge $vw$ where $v$ and $w$ are $8^+$-vertices, $v$ and $w$
each give $\frac{1}{4}$ to each of the faces containing the edge $vw$, and $\frac{1}{4}$ more to the face containing $vw$ if there is only one face containing $vw$. \label{d1}
\item For each $3^+$-vertex $v$ with degree at most $7$ in $G$, $v$ gives $\frac{1}{2}$ to each of its $2$-neighbors. Moreover, it gives $\frac{1}{2}$ to each of its $5$-faces
where it is adjacent to two $8^+$-vertices and where there are two $2$-vertices. \label{d2}
\item Each face $f$ gives $\frac{1}{4}$ to its $3^+$-vertices with degree at most $7$ that are consecutive to an $8^+$-vertex,
for each time they appear consecutively to an $8^+$-vertex in the boundary of $f$. \label{d3}
\item Each $5$-face gives $\frac{1}{4}$ to each of its $2$-vertices with no $2$-neighbor and $\frac{5}{8}$ to its $2$-vertices with a $2$-neighbor. \label{d4}
\item Each $7^+$-face gives $\frac{3}{4}$ to each of its $2$-vertices that belong to a $5$-face and have no $2$-neighbors, $\frac{7}{8}$ to each of its $2$-vertices
that belong to a $5$-face and have a $2$-neighbor, $\frac{1}{2}$ to each of its $2$-vertices that do not belong to a $5$-face and have no $2$-neighbors,
and $\frac{3}{4}$ to each of its $2$-vertices that do not belong to a $5$-face and have a $2$-neighbor. \label{d5}
\end{enumerate}

Let $\omega$ be the initial weight distribution, and let $\omega'$ be the final weight distribution, after the discharging procedure.

\begin{lemma}\label{l_vd}
Every vertex $v$ verifies $\omega'(v)\ge 0$.
\end{lemma}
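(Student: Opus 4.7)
The proof proceeds by case analysis on $d(v)$. A $d$-vertex starts with weight $d-4$, so only $d(v)\in\{2,3\}$ have an initial deficit, whereas $d(v)\ge 8$ enjoys a surplus of at least $4$. I expect the $2$-vertex case to be the main obstacle since its deficit of $-2$ must be covered by a tight combination of several rules.

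For $d(v)=2$: Lemmas~\ref{l_sb} and~\ref{l_23} guarantee that $v$ has at least one $8^+$-neighbor and no $3$-neighbor, so the three subcases for the other neighbor are a $2$-vertex, a vertex of degree in $\{4,5,6,7\}$, or another $8^+$-vertex. The key structural observation is that $v$ lies on at most one $5$-face: if both incident faces were $5$-faces $a$-$v$-$b$-$c_i$-$d_i$-$a$ for $i=1,2$, the outer walk $a$-$d_1$-$c_1$-$b$-$c_2$-$d_2$-$a$ would form a $6$-cycle (the six vertices are forced to be pairwise distinct, since any coincidence among them produces a $C_3$ or $C_4$ in $G$), contradicting $G\in{\cal C}$. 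In each subcase, one verifies that the income from Rule~1 (each $8^+$-neighbor gives $\frac{1}{2}$), from Rule~2 (each neighbor of degree in $\{4,5,6,7\}$ gives $\frac{1}{2}$), and from Rules~4 and~5 (whose payouts are calibrated by whether $v$ has a $2$-neighbor and by whether $v$ is incident to a $5$-face) sums to exactly $2$, so $\omega'(v)=0$.

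For $3\le d(v)\le 7$: Lemma~\ref{l_3+bb} gives $v$ at least two $8^+$-neighbors, each contributing $\frac{1}{2}$ via Rule~1. Moreover, for every $8^+$-neighbor $w$ of $v$, the edge $vw$ lies on two faces and Rule~3 yields $\frac{1}{4}$ from each, so Rule~3 alone contributes $\frac{1}{2}$ per $8^+$-neighbor. Against this, $v$ pays out $\frac{1}{2}$ per $2$-neighbor and at most $\frac{1}{2}$ per $5$-face on which $v$ sits between two $8^+$-vertices with two $2$-vertices (Rule~2). For $d(v)=3$, Lemma~\ref{l_23} forbids $2$-neighbors and, if $v$ is the central $3$-vertex of a special configuration, the sponsor additionally provides an extra $\frac{1}{2}$ to $v$ via Rule~1; the net is non-negative in every subcase. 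For $d(v)\in\{4,5,6,7\}$, the already non-negative initial weight together with the Rule~1 and Rule~3 incomes comfortably covers any Rule~2 payouts.

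For $d(v)\ge 8$: $v$ receives nothing and gives $\frac{1}{2}$ per $7^-$-neighbor, $\frac{1}{2}$ per edge to another $8^+$-vertex (the total being the same whether that edge bounds two faces or only one), and $\frac{1}{2}$ per special structure sponsored by $v$. Letting $s$ denote the number of sponsored structures, which equals the number of outgoing edges of $v$ in $\widehat G$, the total outflow is $\frac{1}{2}(d(v)+s)$ and hence $\omega'(v)=\frac{d(v)-s}{2}-4$. By the chosen orientation of $\widehat G$, a non-root satisfies $s\le\hat d(v)-1$, so Lemma~\ref{l_dhat} yields $d(v)-s\ge(d(v)-\hat d(v))+1\ge 8$; a root satisfies $s=\hat d(v)$ but Lemma~\ref{l_dhat2} gives $d(v)-\hat d(v)\ge 8$. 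In either case $\omega'(v)\ge 0$.
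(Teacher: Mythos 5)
Your overall architecture matches the paper's: the $d\ge 8$ case via the sponsor count $s$ and Lemmas~\ref{l_dhat} and~\ref{l_dhat2}, and the $d=2$ case via the ``at most one $5$-face'' observation with the four calibrated subcases summing to exactly $2$, are essentially the paper's argument. The problem is your crediting of Step~\ref{d3} for vertices of degree $3$ to $7$, which is the step your middle cases lean on. You read Rule~\ref{d3} as paying $\frac14$ from \emph{each} of the two faces containing an edge $vw$ with $w$ an $8^+$-vertex, i.e.\ $\frac12$ per $8^+$-neighbor. That is not the accounting the paper uses, and it cannot be: in the face analysis (Lemma~\ref{l_fd}) the tight $5$-face consisting of two $2$-vertices, an $8^+$-vertex, a small $3^+$-vertex, and another $8^+$-vertex gives that small vertex only $\frac14$ in Step~\ref{d3}; under your reading it would give $\frac12$ and that face would end at $-\frac14$. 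Consistently, the paper credits a $3$-vertex with two or three $8^+$-neighbors with only $\frac34$ from Step~\ref{d3} (one $\frac14$ per incident face), not $1$ or $\frac32$. So the correct income is at most $\frac14$ per face incident to $v$, roughly $\frac{d'}{4}$ rather than your $\frac{d'}{2}$.

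With the correct crediting, your ``comfortably covers'' claim for $4\le d\le 7$ has a gap: you bound the Step~\ref{d2} payout only by ``$\frac12$ per qualifying $5$-face'' without bounding how many there are, and the arithmetic then fails in the worst cases (e.g.\ $d=4$, $d'=2$). The paper closes this by observing that a qualifying $5$-face corresponds to a pair of consecutive $8^+$-edges in the rotation at $v$, so there are at most $d'-1$ of them when $v$ has a $7^-$-neighbor, and by treating the all-$8^+$-neighbor case separately (then there are no $2$-neighbors to pay). Similarly, for $d=3$ with three $8^+$-neighbors you assert non-negativity ``in every subcase''; the needed observation is that paying $\frac12$ to all three incident $5$-faces forces $v$ to be the center of a special configuration, so the sponsor's extra $\frac12$ from Step~\ref{d1} arrives---you mention that bonus, but without the forcing argument the case where $v$ pays $\frac32$ outside a special configuration is not excluded, and with income $\frac32+\frac34$ it would come out to $-\frac14$. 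These are fixable with the paper's counting, but as written your proof of the $3\le d\le 7$ cases rests on the inflated Step~\ref{d3} income.
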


\begin{proof}
Let $v$ be a vertex of degree $d$. We have $\omega(v)=d-4$.

\begin{itemize}
\item Suppose first that $d\ge 8$. The vertex $v$ gives $\frac{1}{2}$ to each of its $7^-$-neighbors and two times $\frac{1}{4}$
for each of its $8^+$-neighbors in Step~\ref{d1}, for a total of $\frac{d}{2}$. As $d\ge 8$, we have $\omega(v)=d-4\ge\frac{d}{2}$,
therefore if $v$ sponsors no special structure, then $\omega'(v)=d-4-\frac{d}{2}\ge 0$. 

Suppose $v$ sponsors a special structure. If $v$ sponsors all of its special structures, then $v$ is the root of its component in $\widehat G$,
thus $d-\hat d(v)\ge 8$, and thus $\omega'(v)=d-4-\frac{\hat d(v)}{2}-\frac{d}{2}=d-\hat d(v)-4-\frac{d-\hat d(v)}{2}\ge 0$.
If $v$ does not sponsor all of its special structures, then $d-\hat d(v)\ge 7$, and $\omega'(v)=d-4-\frac{\hat d(v)-1}{2}-\frac{d}{2}=
d-\hat d(v)-\frac{7}{2}-\frac{d-\hat d(v)}{2}\ge 0$.
\item Suppose now that $4\le d\le 8$. By Lemma~\ref{l_3+bb}, $v$ has at least two $8^+$-neighbors. The vertex $v$ only gives weight in Step~\ref{d2}.
Moreover, it gives at most $\frac{1}{2}$ to each of its $2$-neighbors plus $\frac{1}{2}$ for each pair of consecutive $8^+$-vertices in Step~\ref{d2}.
If $v$ has only $8^+$-neighbors, then it receives $\frac{d}{2}$ in Step~\ref{d1}, and gives at most $\frac{d}{2}$ in Step~\ref{d2}, so $\omega'(v)\ge \omega(v)=d-4\ge 0$.
Suppose $v$ has at least one $7^-$-neighbor. Let $d'\ge 2$ be the number of $8^+$-neighbors of $v$. The vertex $v$ receives $\frac{d'}{2}$ in Step~\ref{d1}.
It gives at most $\frac{d-d'}{2}$ to the $2$-vertices and at most $\frac{d'-1}{2}$ to the faces for a total of at most $\frac{d-d'}{2}+\frac{d'-1}{2}=\frac{d}{2}-\frac{1}{2}$
in step~\ref{d2}. It receives at least $\frac{d'}{4}$ in Step~\ref{d3}. We have $\omega'(v)\ge d-4-\frac{d}{2}+3\frac{d'}{4}+\frac{1}{2}\ge 0$, since $d'\ge 2$ and $d\ge 4$.
\item Suppose that $d=3$. By Lemma~\ref{l_3+bb}, $v$ has at least two $8^+$-neighbors, and by Lemma~\ref{l_23}, $v$ has no $2$-neighbors.
If $v$ has exactly two $8^+$-neighbors, then it receives $1$ in Step~\ref{d1}, gives $\frac{1}{2}$ in Step~\ref{d2}, and receives $\frac{3}{4}$
in Step~\ref{d3}, therefore $\omega'(v)\ge\frac{1}{4}>0$. If $v$ has three $8^+$-neighbors, then $v$ receives $\frac{3}{2}$ in Step~\ref{d1}
and an additional $\frac{3}{4}$ in Step~\ref{d3}, and it gives at most $1$ in Step~\ref{d2} unless it is in a special configuration,
in which case it gives at most $\frac{3}{2}$ in Step~\ref{d2} and receives $2$ in Step~\ref{d1}. Therefore if $v$ has three $8^+$-neighbors, then $\omega'(v)\ge\frac{1}{4}>0$.
\item Suppose that $d=2$. Note that $v$ cannot be in two $5$-faces since $G\in\cal C$. 
\begin{itemize}
\item If $v$ is in a $5$-face and adjacent to another $2$-vertex, then it receives $\frac{1}{2}$ from its $8^+$-neighbor in Step~\ref{d1},
$\frac{5}{8}$ from its $5$-face in Step~\ref{d4}, and $\frac{7}{8}$ from its other face in Step~\ref{d5}. 
\item If $v$ is in a $5$-face and adjacent to no other $2$-vertex, then it receives $1$ from its $3^+$-neighbors in Steps~\ref{d1} and~\ref{d2},
$\frac{1}{4}$ from its $5$-face in Step~\ref{d4}, and $\frac{3}{4}$ from its other face in Step~\ref{d5}. 
\item If $v$ is not in a $5$-face and is adjacent to another $2$-vertex, then it receives $\frac{1}{2}$ from its $8^+$-neighbor in Step~\ref{d1},
and $\frac{3}{2}$ from its faces in Step~\ref{d5}. 
\item If $v$ is in a $5$-face and adjacent to no other $2$-vertex, then it receives $1$ from its $3^+$-neighbors in Steps~\ref{d1} and~\ref{d2},
and $1$ from its faces in Step~\ref{d5}. 
\end{itemize}
In all cases, $v$ receives $2$ over the procedure, and thus $\omega'(v)=2-4+2=0$.
\end{itemize}
\end{proof}

\begin{lemma}\label{l_fd}
Every face $\alpha$ satisties $\omega'(\alpha)\ge 0$.
\end{lemma}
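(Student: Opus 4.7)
The plan is to verify $\omega'(\alpha)\ge 0$ separately for $5$-faces and $7^+$-faces, using that the forbidden cycles in $\cal C$ force every face of $G$ to have degree $5$ or at least $7$. The initial charge is $\omega(\alpha)=d(\alpha)-4$, which equals $1$ for a $5$-face and is at least $3$ for a $7^+$-face.

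For a $5$-face $\alpha$, I would first classify by the number of $2$-vertices on the boundary. By Lemma~\ref{l_sb}, three consecutive $2$-vertices are impossible (the middle one would be a $2$-vertex with two $2$-neighbors), so $\alpha$ contains at most three $2$-vertices, and if it contains exactly three then the arc of two adjacent $2$-vertices forces each of the two remaining vertices to be an $8^+$-vertex, and they are non-adjacent on the $5$-cycle, so $\alpha$ is a special face. A special face receives $\frac{1}{2}$ from its sponsor in Step~\ref{d1} and pays $\frac{1}{4}+\frac{5}{8}+\frac{5}{8}=\frac{3}{2}$ in Step~\ref{d4}, giving $\omega'(\alpha)=0$. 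A $5$-face inside a special configuration has one $3$-vertex $u$, two adjacent $2$-vertices, and two non-adjacent $8^+$-vertices; here the Step~\ref{d2} inflow of $\frac{1}{2}$ from $u$ balances the $\frac{1}{4}$ paid to $u$ in Step~\ref{d3} and the $2\cdot\frac{5}{8}$ paid in Step~\ref{d4}, again giving $\omega'(\alpha)=0$. For the remaining $5$-faces (at most two $2$-vertices, not inside a special configuration), I would enumerate the boundary patterns and check that the $\frac{1}{2}$ inflow per $8^+$-$8^+$ edge from Step~\ref{d1}, together with Step~\ref{d2} when applicable, absorbs the Step~\ref{d3} and Step~\ref{d4} outflow.

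For a $7^+$-face $\alpha$, the surplus $d(\alpha)-4\ge 3$ is generous, but $\alpha$ may pay up to $\frac{7}{8}$ per boundary $2$-vertex in Step~\ref{d5}. Lemma~\ref{l_sb} again limits runs of $2$-vertices along the boundary to length at most two. I would amortize the outflow per boundary edge: each boundary edge is charged at most a constant strictly below $1$, leaving the $d(\alpha)-4$ surplus as safety margin. The larger Step~\ref{d5} payment to a $2$-vertex lying on an adjacent $5$-face is calibrated to cover exactly the deficit that the $5$-face cannot afford, so the global accounting remains consistent across faces sharing a $2$-vertex.

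The main obstacle is the $7^+$-face case, and especially the $7$-faces, where the margin $d(\alpha)-4=3$ is tightest. The variety of possible boundary patterns—combinations of isolated $2$-vertices, $2$-$2$ pairs, $2$-vertices that do or do not lie on neighboring $5$-faces, and $3$-to-$7$- and $8^+$-vertices in between—makes any global count cumbersome, so a careful per-edge or per-block amortization is needed. The calibrated Step~\ref{d5} fractions $\frac{3}{4},\frac{7}{8},\frac{1}{2},\frac{3}{4}$, together with the extra $\frac{1}{4}$ that Step~\ref{d1} provides when an $8^+$-$8^+$ edge is shared by only one face, are precisely what is needed to make the amortization work in all sub-cases.
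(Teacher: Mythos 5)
Your treatment of the $5$-faces is on the right track and, where you actually carry it out, agrees with the paper: the special face balances as $1+\frac12-\bigl(\frac14+2\cdot\frac58\bigr)=0$, and the $5$-face of a special configuration balances as $1+\frac12-\frac14-2\cdot\frac58=0$ via the Step~\ref{d2} payment from the $3$-vertex. But the remaining $5$-face patterns are only promised (``I would enumerate\dots''), and, more seriously, your argument for $7^+$-faces does not close. The claim that it suffices to charge ``each boundary edge at most a constant strictly below $1$'' is quantitatively insufficient: that only gives total outflow $<d$, whereas you need total outflow at most $d-4$, i.e.\ average at most $1-\frac4d$ per edge, which is $\frac37$ for a $7$-face. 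A natural per-edge split already violates this: two adjacent $2$-vertices of a $7$-face, each owed $\frac34$ by Step~\ref{d5}, load their common edge with $\frac34>\frac37$. The cases $d=7,8,9$ are in fact tight (the bounds hold with equality), so no ``safety margin'' argument can replace an exact count.

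What is missing are the two counting facts the paper relies on. First, a $7^+$-face pays only its $7^-$-vertices that are consecutive on its boundary to an $8^+$-vertex (for $2$-vertices this follows from Lemma~\ref{l_sb}, since both neighbours of a $2$-vertex lie on the face; for $3$-to-$7$-vertices it is the definition of Step~\ref{d3}); if the face has $k$ $8^+$-vertices on its boundary, there are at most $\min(2k,\,d-k)$ such receivers, hence at most $4$ for $d=7$, $5$ for $d=8$, $6$ for $d=9$, and at most $d-4$ for $d\ge10$. Second, the number of $\frac78$-payments must be capped: two adjacent $2$-vertices of a $7$-face cannot also lie on a $5$-face (this would create a $6$-cycle), so every receiver of a $7$-face gets at most $\frac34$, giving $3-4\cdot\frac34=0$; similarly at most one (resp.\ two) pairs of adjacent $2$-vertices of an $8$-face (resp.\ $9$-face) lie on $5$-faces, giving $4-2\cdot\frac78-3\cdot\frac34=0$ and $5-4\cdot\frac78-2\cdot\frac34=0$. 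Your outline mentions neither the $\min(2k,d-k)$ count nor the no-$6$-cycle restriction on $\frac78$-payments, and without them the tight cases $d=7,8,9$ cannot be verified; this is a genuine gap rather than a mere omission of routine detail.
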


\begin{proof}
Let $\alpha$ be a vertex of degree $d$. We have $\omega(\alpha)=d-4$.

\begin{itemize}
\item Suppose $d=5$. If $\alpha$ is a special face, then it receives $\frac{1}{2}$ in Step~\ref{d1} and gives $\frac{1}{4}+2\cdot\frac{5}{8}=\frac{3}{2}$ in Step~\ref{d4}. 

If $\alpha$ has no two consecutive $2$-vertices, then it gives at most $\frac{1}{4}$ to its small vertices over Steps~\ref{d3} and~\ref{d4},
and does not actually give anything unless one of its vertices is an $8^+$-vertex, and thus gives at most $1$ overall. 

If $\alpha$ has two consecutive $2$-vertices and its three other vertices are $8^+$-vertices, then it receives $1$ in Step~\ref{d1}
and gives at most $2\cdot\frac{5}{8}=\frac{5}{4}\le 2$ overall.

The only remaining case is when $\alpha$ has, in this consecutive order, two $2$-vertices, an $8^+$-vertex,
a $3^+$-vertex with degree at most $7$, and another $8^+$-vertex. In this case, $\alpha$ receives $\frac{1}{2}$ in Step~\ref{d2},
and gives $2\cdot\frac{5}{8}+\frac{1}{4}=\frac{3}{2}$ over Steps~\ref{d3} and~\ref{d4}.

In all cases, $\omega'(\alpha)\ge 1-1=0$.
\item Suppose $d=7$. Note that if there are two adjacent $2$-vertices in $\alpha$, then these two vertices are not in a $5$-face,
otherwise there would be a cycle of length $6$ in $G$. The face $\alpha$ has an initial charge of $3$, gives at most $\frac{3}{4}$
to its $7^-$-vertices that are adjacent to an $8^+$-vertex in $\alpha$ and nothing to its other vertices. There can be at most four of these vertices.
Therefore $\omega'(\alpha)\ge 3-4\cdot\frac{3}{4}=0$.
\item Suppose $d=8$. Note that at most one pair of adjacent $2$-vertices is in a $5$-face, otherwise there would be a cycle of length $6$ in $G$.
The face $\alpha$ has an initial charge of $4$, gives at most $\frac{7}{8}$ to its $7^-$-vertices that are adjacent to an $8^+$-vertex in $\alpha$,
and nothing to its other vertices. There can be at most five of these vertices, and at most two are given $\frac{7}{8}$,
the other being given at most $\frac{3}{4}$. Therefore $\omega'(\alpha)\ge 4-2\cdot\frac{7}{8}-3\cdot\frac{3}{4}=0$.
\item Suppose $d=9$. Note that at most two pairs of adjacent $2$-vertices are in a $5$-face, otherwise there would be a cycle of length $6$ in $G$.
The face $\alpha$ has an initial charge of $5$, gives at most $\frac{7}{8}$ to its $7^-$-vertices that are adjacent to an $8^+$-vertex in $\alpha$,
and nothing to its other vertices. There can be at most six of these vertices, at most four are given $\frac{7}{8}$, and the others are given at most $\frac{3}{4}$.
Therefore $\omega'(\alpha)\ge 5-4\cdot\frac{7}{8}-2\cdot\frac{3}{4}=0$.
\item Suppose $d\ge 10$. 
The face $\alpha$ has an initial charge of $d-4$, gives at most $\frac{7}{8}$ to its $7^-$-vertices that are adjacent to an $8^+$-vertex in $\alpha$,
and nothing to its other vertices. There can be at most $d-4$ of these vertices, therefore $\omega'(\alpha)\ge d-4-(d-4)\cdot\frac{7}{8}>0$.
\end{itemize}
\end{proof}

By Euler's formula, since $G$ is connected by Lemma~\ref{l_co} and has at least one vertex, $n+f-m=2$. The initial weight of the graph is 
$\sum_{v\in V(G)}\omega(v)+\sum_{\alpha\in F(G)}\omega(\alpha) = \sum_{v\in V(G)} (d(v)-4)+\sum_{\alpha \in F(G)} (d(\alpha)-4)=
\sum_{v\in V(G)} d(v)+\sum_{\alpha \in F(G)} d(\alpha)-4n-4f=4m-4n-4f=-8 < 0$. Therefore the initial weight of the graph is negative,
thus the final weight of the graph is negative. Since by Lemmas~\ref{l_vd} and~\ref{l_fd}, the final weight of every face and
every vertex is non-negative, we get a contradiction. This completes the proof of Theorem~\ref{t+}.

\section{Proof of Theorem~\ref{t-}} \label{sec:t-}

Let $k\ge3$ be a fixed integer. Suppose that there exists a graph in $\cal C$ that is not $(0,k)$-colorable.
We consider such a graph $H_k$ that is minimal according to $\preceq$.
By adapting the proofs of Lemmas~\ref{l_co},~\ref{l_1v}, and~\ref{l_sb}, we obtain that the minimum degree of $H_k$ is at least two
and every $(k+1)^-$-vertex in $H_k$ is adjacent to a $(k+2)^+$-vertex. Suppose for contradiction that $H_k$ contains no $2$-vertex.
We consider the discharging procedure such that the initial chage of every vertex is equal to its degree
and every $5^+$-vertex gives $\tfrac13$ to every adjacent $3$-vertex.
Then the final charge of a $3$-vertex is at least $3+\tfrac13=\tfrac{10}3$, the final charge of a $d$-vertex with $d\ge k+2$ is at least
$d-d\times\tfrac13\ge\tfrac 2d/3\ge\tfrac 2(k+2)/3\ge\tfrac{10}3$, and the final charge of every remaining vertex is at least $4>\tfrac{10}3$.
This implies that the maximum average degree of $H_k$ is at least $\tfrac{10}3$, which is a contradiction since $H_k$ is a planar graph with girth at least $5$.
Thus, $H_k$ contains a $2$-vertex $v$ adjacent to the vertices $u_1$ and $u_5$.

By minimality of $H_k$, $H_k-v$ is $(0,k)$-colorable, every $(0,k)$-coloring of $H_k-v$ is such that $u_1$ and $u_5$ get distinct colors, and
the vertex in $\acc{u_1,u_5}$ that is colored $k$ has exactly $k$ neighbors that are colored $k$.

Consider the graph $H'_k$ obtained from $H_k-v$ by adding three $2$-vertices $u_2$, $u_3$, and $u_4$ which form a path $u_1u_2u_3u_4u_5$.
Notice that $H'_k$ is $(0,k)$-colorable and that in every $(0,k)$-coloring of $H'_k$ is such that $u_3$ is colored $k$ and is adjacent to exactly
one vertex colored $k$. It is easy to see that $H'_k$ is in $\cal C$.

We are ready to prove that deciding whether a graph in $\cal C$ is $(0,k)$-colorable is NP-complete.
The reduction is from the NP-complete problem of deciding whether a planar graph with girth at least $9$ is $(0,1)$-colorable~\cite{EMOP:2013}.
Given in instance $G$ of this problem, we construct a graph $G'\in\cal C$, as follows
For every vertex $v$ in $G$, we add $k-1$ copies of $H'_k$ and we add an edge between $v$ and the vertex $u_3$ of each these copies.
Notice that $G'$ is in $\cal C$ since $G'$ is planar and every cyle of length at most $8$ is contained in a copy of $H'_k$ which is in $\cal C$.
Notice that a $(0,1)$-coloring of $G$ can be extended to a $(0,k)$-coloring of $G'$.
Conversely, a $(0,k)$-coloring of $G'$ induces a $(0,1)$-coloring of $G$.
So $G$ is $(0,1)$-colorable if and only if $G'$ is $(0,k)$-colorable.

\section{A graph in $\cal C$ that is not $(0,3)$-colorable} \label{sec:ce}

Consider the graph $F_{x,y}$ depicted in Figure~\ref{04}. Suppose for contradiction that $F_{x,y}$ admits a $(0,3)$-coloring such that
all the neighbors of $x$ and $y$ are colored $0$ (the white vertices in the picture). Then the neighbors of those white vertices are
colored $k$. We consider the $8$ big vertices. Each of them is colored $k$ and is ajacent to two vertices colored $k$.
For every pair of adjacent red vertices, at least one of them is colored $k$. Notice that every red vertex is adjacent to a big vertex.
Since there are $9$ pairs of adjacent red vertices, there exists a big vertex that is adjacent to at least two red vertices colored $k$.
This big vertex is thus adjacent to four vertices colored $k$, which is a contradiction.

In the graph depicted in Figure~\ref{bb}, every dashed line represent a copy of $F_{x,y}$ such that the extremities are $x$ and $y$.
Suppose for contradiction that this $(C_3,C_4,C_6)$-free planar graph admits a $(0,3)$-coloring.
Each of the two drawn edges has at least one extremity colored $k$.
Thus, there exist two vertices $u$ and $v$ colored $k$ that are linked by $7$ copies of $F_{x,y}$.
Since at most $3$ neighbors of $u$ and at most $3$ neighbors of $v$ can be colored $k$,
one these $7$ copies of $F_{x,y}$ is such that all the neighbors of $x$ and $y$ are colored $0$.
This is contradiction proves Theorem~\ref{t-}.

Following the proof above, we see that if we remove the green parts in Figures~\ref{04} and~\ref{bb},
we obtain a planar graph with girth $7$ that is not $(0,2)$-colorable.
A graph with such properties is already known~\cite{MO14}, but this new graph is smaller ($184$ vertices instead of $1304$)
and the proof of non-$(0,2)$-colorability is simpler.

\begin{figure}
\begin{minipage}[b]{0.5\linewidth}
\begin{center}
\includegraphics[width=60mm]{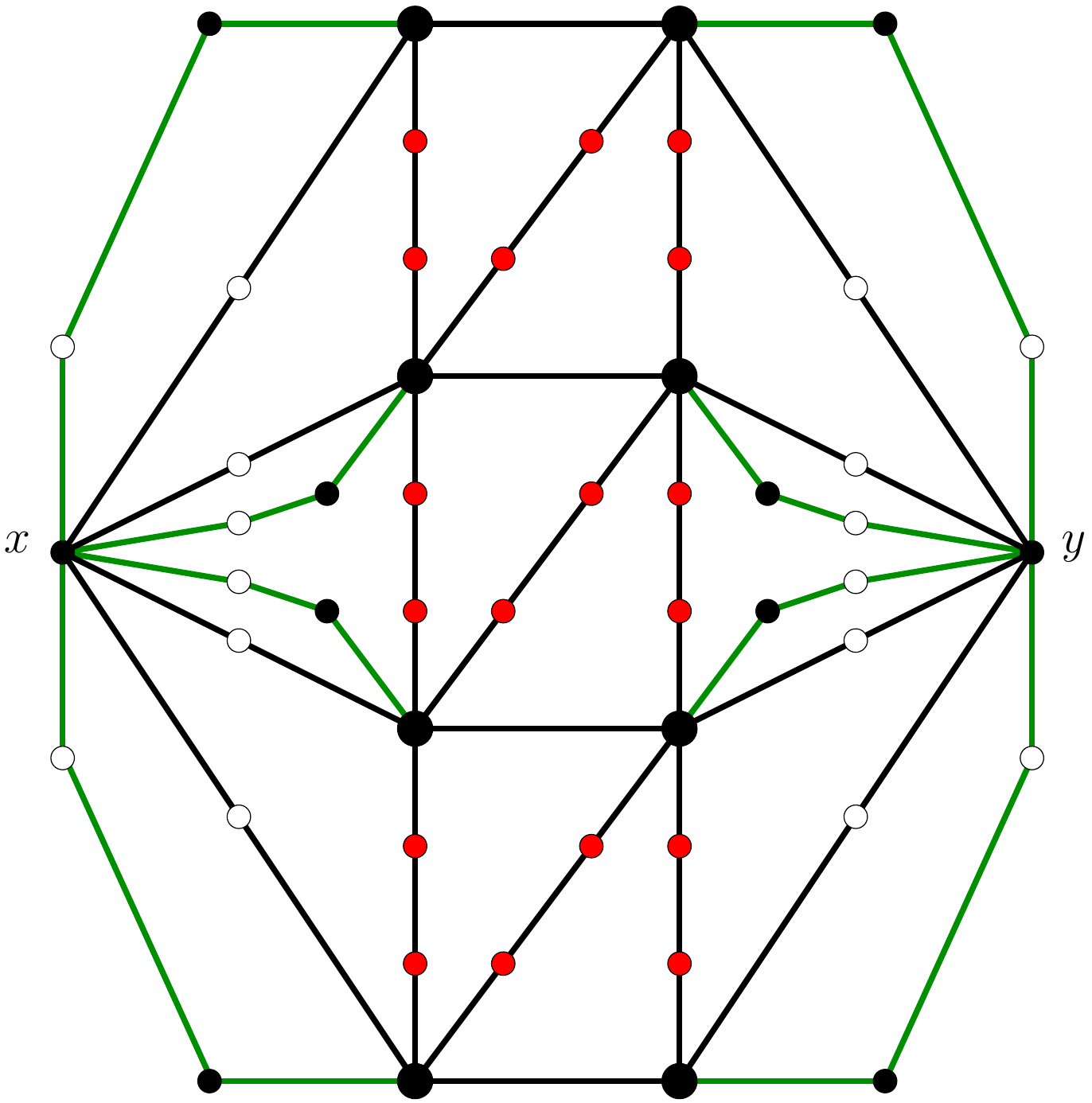}
\caption{The forcing gadget $F_{x,y}$.}
\label{04}
\end{center}
\end{minipage}
\begin{minipage}[b]{0.5\linewidth}
\begin{center}
\includegraphics[width=40mm]{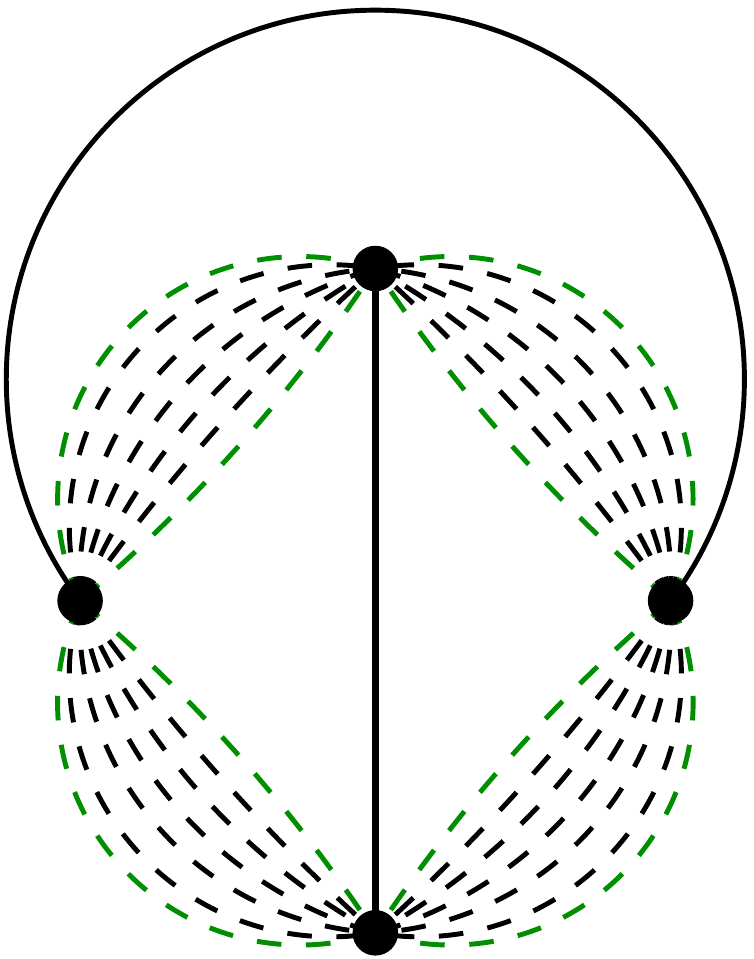}
\caption{The non-$(0,3)$-colorable graph in $\cal C$.}
\label{bb}
\end{center}
\end{minipage}
\end{figure}


\begin{thebibliography}{99}
\bibitem{CLO} I.~Choi, C-H.~Liu S.~Oum.
\newblock Characterization of cycle obstruction sets for improper coloring planar graphs.
\newblock \url{http://mathsci.kaist.ac.kr/~sangil/pdf/2016balanced.pdf}

\bibitem{EMOP:2013} L.~Esperet, M.~Montassier, P.~Ochem, and A.~Pinlou.
\newblock A complexity dichotomy for the coloring of sparse graphs.
\newblock \emph{J. Graph Theory} {\bf 73(1)} (2013), 85--102.

\bibitem{MO14} M.~Montassier and P.~Ochem.
Near-colorings: non-colorable graphs and NP-completeness.
\emph{Electron. J. Comb.} \textbf{22(1)} (2015), \#P1.57.

\end{thebibliography}
\end{document}